\begin{document}
\title{An Effective Tag Assignment Approach for Billboard Advertisement}
\author{Dildar Ali\inst{1} \and Harishchandra Kumar\inst{2} \and Suman Banerjee\inst{1} \and Yamuna Prasad\inst{1}}
\authorrunning{Ali et al.} 
\institute{Indian Institute of Technology Jammu,
J \& K-181221, India. \and National Institute of Technology Raipur, Chhattisgarh, India\\
\email{2021rcs2009@iitjammu.ac.in, hkumar049.btech2021@cse.nitrr.ac.in, suman.banerjee@iitjammu.ac.in, yamuna.prasad@iitjammu.ac.in}}
\maketitle
\begin{abstract}
  Billboard Advertisement has gained popularity due to its significant outrage in return on investment. To make this advertisement approach more effective, the relevant information about the product needs to be reached to the relevant set of people. This can be achieved if the relevant set of tags can be mapped to the correct slots. Formally, we call this problem the \textsc{Tag Assignment Problem in Billboard Advertisement}. Given trajectory, billboard database, and a set of selected billboard slots and tags, this problem asks to output a mapping of selected tags to the selected slots so that the influence is maximized. We model this as a variant of traditional bipartite matching called \textsc{One-To-Many Bipartite Matching (OMBM)}. Unlike traditional bipartite matching, a tag can be assigned to only one slot; in the \textsc{OMBM}, a tag can be assigned to multiple slots while the vice versa can not happen. We propose an iterative solution approach that incrementally allocates the tags to the slots. The proposed methodology has been explained with an illustrated example. A complexity analysis of the proposed solution approach has also been conducted. The experimental results on real-world trajectory and billboard datasets prove our claim on the effectiveness and efficiency of the proposed solution. 
\keywords{Billboard Advertisement, Influence, Tag, Slot, Matching}
\end{abstract}
\section{Introduction}
In the past few years, advertisement has become a central goal for any E-commerce house as it gives a way to convince people about their products or different social and political events, and because of this, their products and ticket sales increase, i.e., revenue increases. The existing literature reported that around $7 - 10\%$ of total revenue is utilized by an E-commerce house for advertising purposes\footnote{\url{https://www.lamar.com/howtoadvertise/Research/}}. Among the several advertising techniques (e.g., newspapers, social networks, television, billboards, etc), billboard advertising has emerged as an effective advertising technique as it is easy to place advertisements and ensures more investment returns\footnote{\url{https://www.thebusinessresearchcompany.com/report/billboard-and-outdoor-advertising-global-market-report}}. Billboard Advertisement is a technique for creating a promotional advertisement for products, events, etc., with the hope that if the content is displayed to a relevant group of people, a subset of them will be influenced, and they might buy the product or attend the event. This will help increase revenue or promote the event. The advertisement content is formalized as tags, and different tags are relevant to different sets of people. So, the tag plays an important role in determining the influence of a billboard. In real-life scenarios, billboard advertisements require the selection of slots and tags, and although the influence value depends on billboard slots and tags, the literature is limited. However, a few studies have considered using tags to maximize their influence on social networks \cite{tekawade2023influence,banerjee2022budgeted}. Ke at al.\cite{ke2018finding} first studied the problem of finding $\ell$ many tags and $k$ many users in a social network to maximize the influence. In the context of the billboard advertisement, Ali et al. \cite{ali2024influential} first introduce the problem of finding the most influential billboard slots and tags. Knowing which tag is important to which slot to maximize the influence is essential in practical scenarios. This is the key problem addressed in this paper. 

\paragraph{\textbf{Motivation.}}
To make an advertisement effective, it is important to adopt a suitable approach such that the information reaches the group of relevant people. In real life, different tags are relevant to various categories of people. Consider a political party running a campaign for an upcoming election and proposing different projects in their election manifesto for different categories of people. For example, this might include doubling the farmer's income in the next three years and a certain percentage tax benefit to the industrialists. Now, the tags relevant to the benefits of the farmers must be displayed on the billboard slots where the audience is farmers. So, to address the real-life scenario study of \textsc{Tag Assignment Problem in Billboard Advertisement} is essential.

\paragraph{\textbf{Our Contribution.}} 
In billboard advertising, the influence varies greatly depending on which tags are displayed in which slots. Studies like \cite{ali2022influential,ali2023influential} have explored finding the top-$k$ influential billboard slots using various techniques. Recently, Ali et al. \cite{ali2024influential} proposed methods to identify influential slots and tags jointly.  However, no existing literature specifically addresses the tag assignment problem in billboard advertisements. This paper tackles this issue and proposes an iterative approach for incrementally allocating tags to slots.
In particular, we make the following contributions in this paper:

\begin{itemize}
    \item We model the Tag Assignment Problem as a two-sided matching problem and show that this problem is NP-hard.
    \item We introduce One-to-Many Bipartite Matching and propose an iterative solution for the tag assignment problem.
    \item We analyzed the time and space requirement of the proposed solution and provided an approximation guarantee.
    \item  The proposed approaches have been implemented with real-world trajectory and billboard datasets to highlight their effectiveness and efficiency.
\end{itemize}

\par Rest of the paper has been organized as follows. Section \ref{Problem} describes the required preliminary concepts and defines our problem formally. Section \ref{Sec:PSA} contains the proposed solution approaches with a detailed analysis. The experimental evaluation has been described in Section \ref{Sec:Experimental_Evaluation}. Finally, Section \ref{Sec:Conclusion} concludes our study and gives directions for future research.

\section{Preliminaries and Problem Definition} \label{Problem}
This section describes the preliminary concepts and formally defines our problem. Initially, we start by explaining the billboard advertisement and its procedure.

\subsection{Billboard Advertisement} A trajectory database contains the location information of trajectories of a particular city, and it can be stated in Definition \ref{Def:Trajectory}.

\begin{definition}[Trajectory Database] \label{Def:Trajectory}
The trajectory database $\mathcal{D}$ is a collection of a tuple in the form of $<\mathcal{U}_{id}, \mathcal{U}_{loc}, [t_{1},t_{2}]>$, where $\mathcal{U}_{id}$, $\mathcal{U}_{loc}$ and $[t_{1},t_{2}]$ represents the unique user id, user location and time stamp, respectively.
\end{definition}

For example, there is a tuple $<\mathcal{U}_{9}, Kolkata\_Airport, [200, 300]>$ and it denotes user $\mathcal{U}_{9}$ was in the $Kolkata\_Airport$ for the duration of $[200, 300]$. Next, we define the billboard database in Definition \ref{Def:Billboard}.

\begin{definition}[Billboard Database] \label{Def:Billboard}
A billboard database can be defined in the form of a tuple $<b_{id}, b_{loc}, b_{cost}>$, where $b_{id}$, $b_{loc}$ and $b_{cost}$ denotes a unique billboard id, billboard location, and billboard cost, respectively.
\end{definition}

Assume a set of digital billboards $\mathcal{B}=\{b_1, b_2, \ldots, b_m\}$ owned by an influence provider are placed across a city run for the duration $[T_1, T_2]$. Different commercial houses can hire these billboards slot-wise to show their advertisement content. Now, we define the notion of billboard slot in Definition \ref{Def:1}.

\begin{definition}[Billboard Slot] \label{Def:1}
A billboard slot is denoted by a tuple of the form $(b_j, [t,t+\Delta])$ where $b_j \in \mathcal{B}$ and $t \in \{T_1, T_1 + \Delta + 1, \ldots, T_2- \Delta-1\}$. Here, $\Delta$ denotes the duration of each slot.
\end{definition}
It can be observed that for each billboard, the number of slots associated with it will be $\frac{T_2-T_1}{\Delta}$ \footnote{Here, we assume that $\Delta$ perfectly divides $T_2-T_1$.}. There are $m$ many billboards $(i.e., |\mathcal{B}|=m)$, hence, the total number of slots denoted as $\mathcal{BS}$ are $m \cdot \frac{T_2-T_1}{\Delta}$. Depending on the available budget, an e-commerce house hires slots weekly or monthly to fulfill their influence demand. Next, we describe the influence function in Section \ref{Sec:Influence-Function}.

\subsection{Influence Function}\label{Sec:Influence-Function}
Normally, e-commerce companies approach the influence provider, who owns multiple billboard slots, to advertise their products and maximize the influence of their products. Now, one question arises: How is the influence of a billboard slot calculated? We state this in Definition \ref{Def:Influence-Function}.

\begin{definition}[Influence of a Billboard Slots]\label{Def:Influence-Function}
    Given a subset of billboard slots $\mathcal{S} \subseteq \mathcal{BS}$, the influence of $\mathcal{S}$  is denoted as $\mathcal{I}(\mathcal{S})$ and defined it as the sum of the influence probabilities of the individual users in the trajectory database.
    \begin{equation} \label{Eq:1}
    \mathcal{I}(\mathcal{S})= \underset{u \in \mathcal{D}}{\sum} [1-  \underset{b \in \mathcal{S}}{\prod}(1-Pr(u,b))]
    \end{equation}  
\end{definition}
Here, the influence function $\mathcal{I}()$ maps each subset of billboard slots to its corresponding influence value, i.e., $\mathcal{I}: 2^{\mathcal{BS}} \longrightarrow \mathbb{R}^{+}_{0}$ and $\mathcal{I}(\emptyset) = 0$. The influence function defined in Equation \ref{Eq:1} is widely used in the existing advertising literature \cite{ali2023influential,zhang2019optimizing,zhang2020towards,zhang2021minimizing} also.
\begin{lemma}
The Influence function $\mathcal{I}()$ is non-negative, monotone and submodular.
\end{lemma}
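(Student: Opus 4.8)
The plan is to verify the three properties directly from the closed form in Equation~\ref{Eq:1}, exploiting the fact that $\mathcal{I}$ is a sum over users of a per-user coverage expression. Throughout I will write $q_u(b) = 1 - Pr(u,b)$ for the ``miss'' probability, which lies in $[0,1]$ because each $Pr(u,b)$ is a probability. Then the per-user contribution for a slot set $\mathcal{S}$ is $f_u(\mathcal{S}) = 1 - \prod_{b \in \mathcal{S}} q_u(b)$, and $\mathcal{I}(\mathcal{S}) = \sum_{u \in \mathcal{D}} f_u(\mathcal{S})$. Since the class of non-negative, monotone, submodular functions is closed under non-negative summation, it suffices to establish the three properties for each $f_u$ and then sum over $u$.

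Non-negativity and the boundary condition are immediate: each factor $q_u(b) \in [0,1]$, so $\prod_{b \in \mathcal{S}} q_u(b) \in [0,1]$ and hence $f_u(\mathcal{S}) \in [0,1]$; the empty product equals $1$, giving $f_u(\emptyset)=0$ and thus $\mathcal{I}(\emptyset)=0$. For monotonicity, adding a slot $b$ to $\mathcal{S}$ multiplies the product by an extra factor $q_u(b) \le 1$, which can only shrink the product and hence only increase $f_u$; summing over users preserves this.

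The substantive step is submodularity, which I would prove through the marginal-gain characterization. For $\mathcal{S} \subseteq \mathcal{T}$ and $b \notin \mathcal{T}$, I compute the per-user marginal gain by factoring out the common product:
\begin{equation*}
f_u(\mathcal{S} \cup \{b\}) - f_u(\mathcal{S}) = Pr(u,b) \cdot \prod_{b' \in \mathcal{S}} q_u(b').
\end{equation*}
The identical computation for $\mathcal{T}$ yields $Pr(u,b) \cdot \prod_{b' \in \mathcal{T}} q_u(b')$. Because $\mathcal{S} \subseteq \mathcal{T}$, the $\mathcal{T}$-product carries all the $\mathcal{S}$-factors plus additional factors, each at most $1$, so $\prod_{b' \in \mathcal{T}} q_u(b') \le \prod_{b' \in \mathcal{S}} q_u(b')$. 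Hence the marginal gain under $\mathcal{S}$ dominates that under $\mathcal{T}$ for every user, and summing over $\mathcal{D}$ gives the submodular inequality for $\mathcal{I}$.

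I expect the only mild obstacle to be the factoring step in the marginal-gain identity: one must carefully separate the factor contributed by the new slot $b$ from the product over the existing set, i.e.\ rewrite $\prod_{b' \in \mathcal{S} \cup \{b\}} q_u(b') = q_u(b) \prod_{b' \in \mathcal{S}} q_u(b')$ and simplify $1 - q_u(b) = Pr(u,b)$. Once this telescoping is in place, the monotonicity of the tail product under the inclusion $\mathcal{S} \subseteq \mathcal{T}$ delivers submodularity with no further work.
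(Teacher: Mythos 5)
Your proof is correct. The paper states this lemma without any proof at all (it simply cites prior advertising literature where this coverage-style influence function is standard), so there is nothing to compare against; the per-user decomposition, the marginal-gain identity $f_u(\mathcal{S}\cup\{b\})-f_u(\mathcal{S}) = Pr(u,b)\cdot\prod_{b'\in\mathcal{S}}(1-Pr(u,b'))$, and closure of non-negative monotone submodular functions under summation constitute exactly the standard argument one would expect here. One cosmetic note: you reuse $\mathcal{T}$ for the larger slot set in the submodularity step, which collides with the paper's use of $\mathcal{T}$ for the tag universe; a different symbol would avoid confusion if this were spliced into the text.
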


As mentioned previously, when one person views an advertisement running on a billboard slot, he is influenced by a certain probability. Most existing literature \cite{zhang2019optimizing,zhang2020towards,zhang2021minimizing,ali2023influential} does not consider that the probability value also depends on the running advertisement content (e.g., tag).
Therefore, we have considered tags to be an important factor in this paper. Consider $\mathcal{T}_{u_i}$ be the set of tags associated with user $u_{i}$ and for all the users in $\mathcal{U}$, the set of tags is relevant as a whole can be denoted as $\mathcal{T} = \underset{u_i \in \mathcal{U}}{\bigcup} \mathcal{T}_{u_i}$. Now, for every $u \in \mathcal{U}$ and $t \in \mathcal{T}$, the influence probability can be denoted as $Pr(u|t)$ and defined in Definition \ref{Def:Tagprobability}.

\begin{definition}[Tag Specific Influence Probability]\label{Def:Tagprobability}
Given a subset of tags $\mathcal{T}^{'} \subseteq \mathcal{T}$ and any user $u \in \mathcal{U}$, the tag-specific influence probability can be defined in Equation \ref{Eq:Tagspecificprobability}.

 \begin{equation} \label{Eq:Tagspecificprobability}
        Pr(u|\mathcal{T}^{'})= 1-\underset{t \in \mathcal{T}^{'}}{\prod} (1-Pr(u|t))
    \end{equation}
\end{definition}

Now, to find out the impact of tags in billboard slots, we define the Tag-specific influence of a billboard slot in Definition \ref{Def:TSIBS}.

\begin{definition}[Tag Specific Influence of Billboard slots] \label{Def:TSIBS}
Given a subset of billboard slots $\mathcal{S} \subseteq \mathcal{BS}$ and a subset of tags $\mathcal{T}^{'} \subseteq \mathcal{T}$, the tags-specific influence of $\mathcal{S}$ is denoted by $\mathcal{I}(\mathcal{S} | \mathcal{T}^{'})$ and defined using Equation No. \ref{Eq:TSIBS}.
\begin{equation} \label{Eq:TSIBS}
\mathcal{I}(\mathcal{S}|\mathcal{T}^{'})= \underset{u \in \mathcal{U}}{\sum} 1- \underset{b \in \mathcal{S}}{\prod} (1-Pr(u,b|\mathcal{T}^{'}))
\end{equation}
\end{definition}

Here, the influence function $\mathcal{I}()$ is a combined function that maps each tag and billboard slot to its corresponding influence value, i.e., $\mathcal{I}: 2^{\mathcal{T}} \times 2^{\mathcal{BS}} \longrightarrow \mathbb{R}^{+}_{0}$. Next, we describe the bipartite matching in Section \ref{Bipartite-Matching}.

\subsection{Bipartite Matching}\label{Bipartite-Matching}
In this work, we formulate our problem as a bipartite matching problem and state this problem in Section \ref{Problem-Definition}. Next, we define bipartite matching in Definition \ref{Def:Bipartite-Matching}.

\begin{definition}[Bipartite Matching]\label{Def:Bipartite-Matching}
A matching is bipartite if it contains a set of edges $\mathcal{E}_{b} \subseteq E$, which do not share any common vertex \cite{preis1999linear}.
\end{definition}


In traditional bipartite matching \cite{https://doi.org/10.1112/jlms/s1-10.37.26}, each tag is allocated to exactly one billboard slot, providing an optimal solution in polynomial time ($\mathcal{O}((n+m)^3)$ via the Hungarian Method) \cite{kuhn2005hungarian}. However, this doesn't suit our problem, as we need to allocate a single tag to multiple billboard slots, with each slot associated with only one tag. To address this, we introduce an iterative approach, formulating it as a one-to-many bipartite matching problem, defined in Definition \ref{Def:OMBM}.

\begin{definition}[One-to-Many Bipartite Matching]\label{Def:OMBM}
A matching is said to be one-to-many bipartite matching if it contains a set of edges $\mathcal{E}_{b} \in E$ that may share a common vertex in $u \in \mathcal{U}$ whereas each vertex in $v \in \mathcal{V}$ can connect with at most one vertex in $\mathcal{U}$ via an edge $e \in \mathcal{E}_{b}$ \cite{dutta2019one}.
\end{definition}

\subsection{Problem Definition}\label{Problem-Definition}
This section defines the Tag Allocation Problem formally. The inputs to this problem are trajectory, and billboard database, and set of selected slots ($\mathcal{S}$) and tags ($\mathcal{T}^{'}$) and the goal is allocate the tags to the slots such that the influence is maximized. One thing to highlight here is that two tags can not be allocated to a single slot. However, the vice versa can happen. Also, it can be observed that in the worst case the number of possible allocations will be of $\mathcal{O}(|\mathcal{S}|^{|\mathcal{T}^{'}|})$. Now, we state our problem formally in Definition \ref{Def:Problem_Statement}.


\begin{definition}[Tag Allocation problem]  \label{Def:Problem_Statement}
Given a trajectory database $\mathcal{D}$, a billboard database $\mathcal{B}$, a selected subset of slots $\mathcal{S}$ and tags $\mathcal{T}^{'}$ respectively, the tag allocation problem asks to assign a tag to a slot such that the influence is maximized.
\end{definition}
From the computational point of view, this problem can be posed as follows:

\begin{tcolorbox}
\underline{\textsc{Tag Allocation Problem}} \\
\textbf{Input:} A trajectory ($\mathcal{D}$) and Billboard ($\mathcal{B}$) Database, A set of slots ($\mathcal{S}$) and tags ($\mathcal{T}$).

\textbf{Problem:} Find out an allocation of the tags to the slots such that the influence is maximized.

\end{tcolorbox}
By a reduction from the Set Cover Problem, we can show that the Tag Assignment Problem is NP-hard. This result has been presented in Theorem \ref{NP-hard}. Due to the space limitation, we are not able to give the whole reduction. 

\begin{theorem}\label{NP-hard} The 
\textsc{Tag Assignment Problem in Billboard Advertisement} is NP-hard and hard to approximate in a constant factor.
\end{theorem}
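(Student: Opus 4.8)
The plan is to prove NP-hardness by a reduction from the Set Cover problem, which naturally captures the combinatorial structure of assigning tags to slots while maximizing a coverage-like influence. First I would recall the decision version of Set Cover: given a universe $\mathcal{X} = \{x_1, \ldots, x_n\}$, a family of subsets $\mathcal{C} = \{C_1, \ldots, C_m\}$ with each $C_i \subseteq \mathcal{X}$, and an integer $k$, decide whether there exist $k$ sets in $\mathcal{C}$ whose union is $\mathcal{X}$. The goal is to build, in polynomial time, an instance of the \textsc{Tag Allocation Problem} whose optimal influence value attains a target threshold if and only if the Set Cover instance is a yes-instance.

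\textbf{Construction.} I would associate each element $x_j \in \mathcal{X}$ with a distinct user $u_j \in \mathcal{U}$ in the trajectory database, and each subset $C_i \in \mathcal{C}$ with a tag $t_i \in \mathcal{T}'$. The key is to set the tag-specific influence probabilities so that $Pr(u_j \mid t_i) = 1$ whenever $x_j \in C_i$ and $Pr(u_j \mid t_i) = 0$ otherwise; this makes the influence of a chosen collection of tags exactly equal to the number of elements covered, by the coverage interpretation of Equation~\ref{Eq:Tagspecificprobability}. I would then create $k$ identical slots (each seen by all users with certainty), so that any assignment places at most $k$ distinct tags onto slots, mirroring the choice of $k$ subsets. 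Under this encoding, an assignment achieving influence exactly $n$ corresponds precisely to $k$ tags whose associated subsets cover all of $\mathcal{X}$, establishing the equivalence of the two decision problems.

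\textbf{Inapproximability.} For the constant-factor hardness claim, I would invoke the well-known result that Set Cover (equivalently, Maximum Coverage) cannot be approximated to within a factor better than $1 - 1/e$ unless $\mathrm{P} = \mathrm{NP}$. Because the reduction above is approximation-preserving — influence value equals coverage size exactly — any constant-factor approximation algorithm for the \textsc{Tag Allocation Problem} would transfer to Maximum Coverage, contradicting the known threshold. I would therefore conclude that no polynomial-time algorithm approximates the tag assignment influence within an arbitrary constant factor (beyond the coverage threshold) unless $\mathrm{P} = \mathrm{NP}$.

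\textbf{The main obstacle} will be making the reduction faithful in the presence of the submodular product form of the influence function: I must ensure that assigning multiple tags to the same slot, or the same tag across multiple slots, does not create influence contributions that break the clean correspondence with set cardinalities. Concretely, the $\prod(1 - Pr(u,b \mid \mathcal{T}'))$ structure in Equation~\ref{Eq:TSIBS} must be controlled so that duplicate coverage does not inflate the objective beyond the number of covered users; the $0/1$ probability gadget is designed precisely to force idempotent coverage, but verifying that the slot-replication does not allow "cheating" (e.g., covering a user fractionally through several slots) is the delicate part that the formal argument must pin down.
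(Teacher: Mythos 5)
The paper itself contains no proof of this theorem: the authors only assert that it follows ``by a reduction from the Set Cover Problem'' and omit the reduction for space. Your proposal therefore follows exactly the route the authors declare, and your construction is sound as far as NP-hardness goes. Encoding elements as users, subsets as tags, setting $Pr(u_j \mid t_i) \in \{0,1\}$ according to membership, and supplying $k$ identical slots each seen by every user with certainty makes the influence of any assignment equal to the number of elements covered by the (at most $k$) distinct tags placed on slots; the decision versions then coincide. The obstacle you flag at the end is genuinely the right one to worry about, and your $0/1$ gadget resolves it: with probabilities in $\{0,1\}$ the quantity $1-\prod_{b}(1-Pr(u,b\mid t(b)))$ in Equation~\ref{Eq:TSIBS} is exactly the indicator that some slot carries a covering tag, so neither repeated tags nor multiple slots can inflate a user's contribution above $1$.

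Where you overreach is the inapproximability half. Your reduction shows the problem contains Maximum Coverage as a special case, which gives hardness of approximating better than $1-1/e$ (Feige). It does not, and cannot, give hardness for an \emph{arbitrary} constant factor: Maximum Coverage itself, and more generally monotone submodular maximization under the partition constraint ``at most one tag per slot,'' admits constant-factor polynomial-time approximations, so on the very instances you construct a constant-factor algorithm exists. Your closing sentence (``within an arbitrary constant factor \dots beyond the coverage threshold'') is internally inconsistent. The defensible claim --- and presumably what the theorem's loosely worded ``hard to approximate in a constant factor'' intends --- is that there is a specific constant, namely $1-1/e$, beyond which approximation is NP-hard. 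Restate the conclusion that way and your argument is complete; as written, the final claim is false.
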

Next, we discuss the proposed solution methodologies in Section \ref{Sec:PSA}.

\section{Proposed Solution Approach}\label{Sec:PSA}

Our experiments are represented in two folded ways: first, by selecting influential slots and tags, and second, by allocating tags to the slots. To address the first goal, we adopt the stochastic greedy approach for influential slots and tags selection introduced by Ali et al. \cite{ali2024influential}. Next, we construct a weighted bipartite graph using the selected slots and tags, and it is described in Section \ref{ConstructionOfGraph}.

\subsection{Construction of the Weighted Bipartite Graph}\label{ConstructionOfGraph}
 In this formulation, we have billboard slots and tags represented in the form of a bipartite graph $\mathcal{G}_{b} = (\mathcal{U} \cup \mathcal{V},\mathcal{E}_{b})$, where $\mathcal{U}$ contains the set of tags while $\mathcal{V}$ contains a set of billboard slots and in practice, $|\mathcal{U}| < < |\mathcal{V}|$. We have adopted the following approach to construct the edge set of the bipartite graph.  For every tag $u \in \mathcal{U}$ and every slot $v \in \mathcal{V}$, we compute the influence (i.e., edge weight) of all the individual allocations using Equation No. \ref{Eq:TSIBS}. Next, we compute the mean weight $(\mu)$ and standard deviation of all the edge weight using $\mu = \frac{1}{|\mathcal{E}_{b}|} \sum_{e \in \mathcal{E}_{b}} \mathcal{W}(e)$ and $\sigma = \sqrt{\frac{1}{|\mathcal{E}_{b}|} \sum_{e \in \mathcal{E}_{b}} (\mathcal{W}(e) - \mu)^2}$, respectively. For each $e \in \mathcal{E}_{b}$ we compute $\theta$-score value $\mathcal{Z}(e) = \frac{\mathcal{W}(e) - \mu}{\sigma}$ and prune the edge if $\mathcal{Z}(e) < \theta$, where $\theta$ is the user-defined parameter (it may be -1, 0, etc.). Basically, we set a threshold, i.e., $\mu + \theta \cdot \sigma$, and if $\mathcal{W}(e) < \mu + \theta \cdot \sigma$, then prune that edge. This method ensures that only edges with weights significantly lower than the average are pruned.

Next, we introduce the one-to-many bipartite matching approach and describe it in Section \ref{Sec:OMBM} to address the second goal.

\subsection{One-to-Many Bipartite Matching (OMBM)}\label{Sec:OMBM}
The OMBM algorithm takes a bipartite graph containing edges between tags and billboard slots as input and returns an allocation of tags to the billboard slots in the form of a 1-D array. This array maps a billboard slot to a tag such that each is allocated to exactly one tag; however, one tag may be allocated to more than one slot. At first, in Line No $1$ to $3$, we initialize the array $\mathcal{Q}$ as empty for the slots $v \in \mathcal{V}$. In-Line No. $4$ \texttt{while} loop will execute till the billboard slots, $|\mathcal{V}|$ is empty and in Line No. $5$ to $6$, for each slot $v \in \mathcal{V}$ store the possible matches in the adjacent vertices of $v$ into $\mathcal{C}_{v}$ from tags set $\mathcal{U}$. Initially, $\mathcal{C}_{v}$ is initialized with the adjacent vertices of $v$ i.e., $\mathcal{A}(v)$. Before further discussions, we define the notion of a dominating edge in Definition \ref{Def:Dominating-set}.
\begin{definition}[Dominating Edge]\label{Def:Dominating-set}
In a bipartite graph, an edge $\mathcal{E}_{uv}$ is said to be dominating if $\mathcal{W}_{uv} > \mathcal{W}_{uv^{'}}$ and $\mathcal{W}_{uv} > \mathcal{W}_{u^{'}v}$ where $\mathcal{W}_{uv}$ denotes edge weight of tag $u$ to slot $v$ and $v^{'} \neq v$, $u^{'} \neq u$, $u \in \mathcal{U}$, $v \in \mathcal{V}$.
\end{definition}

\par Next in Line No. $7$ dominating edge set $E_{d}$ is initialized to empty set. In-Line No. $8$ to $12$ for each slot $v \in \mathcal{V}$ the $lc(v)$ function takes $v$ as input and returns an adjacent tag $u$ of $v$ that has maximum edge weight, i.e., the best tag for matching. If $v$ is also the best match for $lc(u)$, then such a vertex pair will be added to $E_{d}$. Now, in Line No. $11$ to $17$, Algorithm \ref{Algo:OMBMP} deals with the edges that are not added to $E_{d}$ in Line No. $8$~ to~ $12$. At first, we pick an end node $u \in \mathcal{U}$ from $E_{d}$. Next, we delete the edge $\mathcal{E}_{uv}$ associated with $u$ as the slot is already matched with a tag. In-Line No. $17$, we iterate through all $\mathcal{C}_{u}$, and this represents the slots that have not yet been matched with any tag. Now, in Line No. $18$, we also check if the total number of slots $|Count_{lc(v)}|$ connected for the tag $lc(v)$ is still within upper limit $Bound_{lc(v)}$ or not. If both conditions are satisfied, we add an edge with $v$ and $u$ to $E_{d}$ and allocate slot $v$ to tag $u$. In-Line No. $21$ and $22$, we remove the allocated tag $u$ and slot $v$, respectively. Finally, Algorithm \ref{Algo:OMBMP} will return $\mathcal{Q}$ by allocating billboard slots to tags.
\SetKwComment{Comment}{/* }{ */}
\begin{algorithm}[H]
\scriptsize
 \KwData{The Trajectory Database $\mathcal{D}$, The Billboard Database $\mathcal{B}$, Tag Database $\mathbb{T}$, A weighted Bipartite Graph $\mathcal{G}(\{\mathcal{U} \cup \mathcal{V}\}, E, \mathcal{W})$.}
 \KwResult{An allocation of Tags to the billboard slots to maximize influence.}
Initialize a 1-D array $\mathcal{Q}$, $\mathcal{U}^{'} = \mathcal{U}$\;
\For{$\text{ each }v \in \mathcal{V}$}{
 $\mathcal{Q}(v) \leftarrow \emptyset$ \Comment*[r]{$\mathcal{Q}$ is the 1-D array}
 }
\While{$|\mathcal{V}| > 0$}{
\For{$\text{ each }v \in \mathcal{V}$}{
 $C_{v} \leftarrow \mathcal{A}(v)$ \Comment*[r]{$C_{v}$ is the candidate and $\mathcal{A}_{v}$ is the adjacent vertices}
 }
$E_{d} \leftarrow \emptyset$ \Comment*[r]{$E_{d}$ is the dominating edge set}

\For{$\text{ each }v \in \mathcal{V}$}{
 $u \leftarrow lc(v)$ \Comment*[r]{$lc()$ find best matching vertex}
\If{$lc(u) = v$}{
$E_{d} = E_{d} \cup \{v, lc(v)\}$\;
$\mathcal{Q}(v) = lc(v)$\;
}}

\While{$E_{d} \neq \emptyset$}{
$u \leftarrow \text{end vertex} \in \mathcal{U}~ \text{from an edge in} ~E_{d}$\;
$v \leftarrow \text{end vertex} \in \mathcal{V}~ \text{from an edge in} ~E_{d}$\;
$E_{d} \leftarrow E_{d} \setminus \{ v^{'},u\}~ \text{where}~ \mathcal{Q}(v^{'}) = u$\;
\For{each $v \in C_{u}$ where $\mathcal{Q}(v) = \emptyset$}{
\If{$lc(lc(v)) = v$~ and~ $|count_{lc(v)}| < Bound_{lc(v)}$}{
$E_{d} = E_{d} \cup \{v, lc(v)\}$\;
$\mathcal{Q}(v) = lc(v)$\;
}}
$\mathcal{U} \leftarrow \mathcal{U} \setminus lc(v)$\;
$\mathcal{V} \leftarrow \mathcal{V} \setminus v$\;
}
\If{$\mathcal{U} = \emptyset$}{
$\mathcal{U} = \mathcal{U}^{'}$\;
}
}
return $\mathcal{Q}$\;
\caption{OMBM Algorithm for Tag Allocation Problem}
\label{Algo:OMBMP}
\end{algorithm}

\paragraph{\textbf{Complexity Analysis.}}
In-Line No. $1$ to $3$, initializing a 1-dimensional array will take $\mathcal{O}(k)$ time as there is $k$ number of slots. The \texttt{while} loop at Line No. $4$ will execute for $\mathcal{O}(k)$ times and Line No. $5$ to $6$ will execute for $\mathcal{O}(k^{2})$. Next, in Line No. $7$ initializing dominating set $E_{d}$ will take $\mathcal{O}(k)$ time. Line No. $8$ \texttt{for loop} will execute for $\mathcal{O}(k^{2})$ times, and Line No. $9$ finding the best-matching vertex will take $\mathcal{O}(k^{2} \cdot \ell)$ time. Line No. $10$ to $12$ will take $\mathcal{O}(k^{2})$ times, and Line No. $8$ to $12$ will take $\mathcal{O}(k^{2} \cdot \ell + k^{2})$ time to execute. Next, in Line No. $13$ \texttt{while loop} will execute for $\mathcal{O}(k^{2})$ time and Line No. $14$ to $16$ will take $\mathcal{O}(k^{2})$ time. The \texttt{for loop} at Line No. $17$ to $20$ will execute for $\mathcal{O}(k^{2} \cdot \ell)$ time. Line No. $21$ and $22$ will take $\mathcal{O}(k^{2})$ time. Finally, Line No. $23$ to $24$ will be executed for $\mathcal{O}(k)$ times in the worst case. So, Algorithm \ref{Algo:OMBMP} will take total $\mathcal{O}(k + k + k^{2} \cdot \ell + k^{2} + k^{2} \cdot \ell)$, i.e., $\mathcal{O}(k^{2} \cdot \ell)$ time to execute in the worst case. The additional space requirement for Algorithm \ref{Algo:OMBMP} will be $\mathcal{O}(k + k + \ell)$ i.e., $\mathcal{O}(k + \ell)$ for the 1-D array $\mathcal{Q}$, $\mathcal{C}(v)$ and $E_{d}$. 

\par Note that $|E| = k\ell$ in the case of a complete bipartite graph. However, if some slots are not eligible for certain tags to be allocated due to the $\theta$-score threshold value, then $|E| < k\ell$. Therefore, we state the complexity of Algorithm \ref{Algo:OMBMP} as linear on the number of edges generated between tags and slots.

\begin{theorem}
The time and space complexity of the Algorithm \ref{Algo:OMBMP} will be $\mathcal{O}(k^{2} \cdot \ell)$ and $\mathcal{O}(k+\ell)$, respectively.
\end{theorem}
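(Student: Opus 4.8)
The plan is to prove both bounds by a line-by-line accounting over the nested loops of Algorithm~\ref{Algo:OMBMP}, writing $k=|\mathcal{V}|$ for the number of slots and $\ell=|\mathcal{U}|$ for the number of tags. First I would pin down the cost of the elementary operations. The initialization in Lines 1--3 touches each slot once, so it costs $\mathcal{O}(k)$. A single evaluation of the $lc(\cdot)$ routine scans the adjacency list of its argument to find the incident edge of maximum weight; for a slot this ranges over its at most $\ell$ candidate tags and therefore costs $\mathcal{O}(\ell)$, while the membership tests and the $Bound_{lc(v)}$ check are each $\mathcal{O}(1)$.

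The crux is bounding the number of iterations of the outer \texttt{while} loop at Line 4. I would observe that $\mathcal{V}$ is only ever shrunk (Line 22) and is never reset -- only the tag set $\mathcal{U}$ is reset in Lines 23--24 -- and that every full pass through the loop body commits and removes at least one slot. Hence the outer loop terminates after at most $\mathcal{O}(k)$ iterations. Within a single such iteration, the dominating contribution comes from the block in Lines 8--12: for each of the $\mathcal{O}(k)$ surviving slots we invoke $lc(\cdot)$ at cost $\mathcal{O}(\ell)$, giving $\mathcal{O}(k\ell)$. The candidate initialization in Lines 5--6, the inner \texttt{while} of Lines 13--22, and the reset in Lines 23--24 are each no more expensive than this. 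Multiplying the per-iteration cost $\mathcal{O}(k\ell)$ by the $\mathcal{O}(k)$ iterations yields the claimed $\mathcal{O}(k^{2}\cdot\ell)$ running time.

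For the space bound I would simply inventory the persistent structures. The output array $\mathcal{Q}$ stores one tag index per slot and needs $\mathcal{O}(k)$ cells; the candidate sets $\mathcal{C}_{v}$ together with the reset copy $\mathcal{U}^{'}$ require $\mathcal{O}(\ell)$; and the dominating edge set $E_{d}$ holds at most one matched pair per active slot, so its size is $\mathcal{O}(k+\ell)$. Summing these gives the asserted $\mathcal{O}(k+\ell)$ auxiliary space.

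I expect the main obstacle to be the amortized argument that the inner \texttt{while} loop (Lines 13--22), together with the $lc(\cdot)$ recomputations it triggers, does not push the per-iteration cost beyond $\mathcal{O}(k\ell)$. The naive worry is that nesting the Line 17 \texttt{for} loop inside the Line 13 \texttt{while} could charge an extra factor of $k$ or $\ell$; the fix is to show that across one outer iteration each candidate edge is examined only a constant number of times, so that the $lc(\cdot)$ calls in Lines 8--12 remain the dominant term. Making this accounting precise -- rather than charging each loop its worst case independently -- is what keeps the bound tight at $\mathcal{O}(k^{2}\cdot\ell)$ instead of a larger polynomial.
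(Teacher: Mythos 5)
Your proposal is correct and follows essentially the same route as the paper: a line-by-line accounting in which the outer \texttt{while} loop runs $\mathcal{O}(k)$ times, the $lc(\cdot)$ evaluations over the $\mathcal{O}(k)$ slots at $\mathcal{O}(\ell)$ each dominate the per-iteration cost, and the auxiliary structures $\mathcal{Q}$, $\mathcal{C}_{v}$ and $E_{d}$ account for the $\mathcal{O}(k+\ell)$ space. Your explicit flagging of the amortization needed for the inner loop of Lines 13--22 is in fact slightly more careful than the paper, which simply asserts the cumulative $\mathcal{O}(k^{2}\cdot\ell)$ bound for that block.
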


\begin{lemma}\label{Lemma1}
All edges in the dominating edge set $E_{d}$ must be a part of the optimal matching in the solution obtained before Line No. $13$ in Algorithm \ref{Algo:OMBMP}.
\end{lemma}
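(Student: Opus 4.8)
The plan is to argue by an exchange (contradiction) argument against a maximum-weight one-to-many matching $M^{*}$, after first pinning down the combinatorial structure of $E_{d}$. The key preliminary observation is that the dominating condition of Definition \ref{Def:Dominating-set} uses \emph{strict} inequalities: an edge $(u,v)\in E_{d}$ satisfies $\mathcal{W}_{uv}>\mathcal{W}_{uv'}$ for every $v'\neq v$ and $\mathcal{W}_{uv}>\mathcal{W}_{u'v}$ for every $u'\neq u$. Hence each tag $u$ has a \emph{unique} strictly best slot and each slot $v$ a unique strictly best tag, so $u$ (respectively $v$) can be the endpoint of at most one dominating edge. Consequently the set built in Lines $8$--$12$ is a collection of vertex-disjoint edges --- a genuine partial matching --- which is exactly what makes it safe to commit all of $E_{d}$ simultaneously. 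I would state and prove this disjointness as the first step, since the rest of the argument can then treat each edge of $E_{d}$ independently.

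First I would establish the statement in the regime where the per-tag bounds are not binding at this stage, which is the situation before Line $13$: because each tag is the endpoint of at most one edge of $E_{d}$, committing $E_{d}$ consumes at most one unit of every $Bound_{u}$, so no capacity conflict is created. In this regime the slots decouple: the contribution of a slot $v$ to the objective is the weight of the single edge incident to it, and for a dominating edge $(u,v)$ the tag $u=lc(v)$ is the unique maximiser of $\mathcal{W}_{u'v}$ over $u'$. Assigning $v$ to any $u'\neq u$ therefore strictly lowers $v$'s contribution while freeing no resource that could be used more profitably elsewhere, so any matching omitting $(u,v)$ can be improved by re-pointing $v$ to $u$, contradicting optimality of $M^{*}$.

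The core exchange step handles a supposed optimal $M^{*}$ with $(u,v)\notin M^{*}$. In $M^{*}$ slot $v$ is matched to some $u'\neq u$, and since $u$ is $v$'s strictly best tag we have the gain $\mathcal{W}_{uv}-\mathcal{W}_{u'v}>0$. If $u$ still has spare capacity, simply re-matching $v$ from $u'$ to $u$ is feasible and strictly increases the weight, giving the contradiction. If instead $u$ is saturated, every slot $w$ currently matched to $u$ satisfies $\mathcal{W}_{uw}<\mathcal{W}_{uv}$ because $v$ is $u$'s strictly best slot, so one attempts to swap $v$ onto $u$ while re-routing the evicted $w$ onto the slot freed at $u'$.

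The hard part will be precisely this saturated case: making the swap feasible requires that an edge $(u',w)$ survive the $\theta$-score pruning, and making it non-decreasing requires controlling the cross term $\mathcal{W}_{u'w}-\mathcal{W}_{uw}$, which the mutual-best inequalities alone do not bound. Indeed, a two-tag, two-slot instance with tight bounds already shows that a strictly dominating edge can be absent from every optimal \emph{capacitated} assignment, so the claim genuinely relies on the bounds being non-binding for the edges of $E_{d}$ --- as they are before Line $13$, where each tag has been used at most once. I would therefore make that hypothesis explicit and, within it, close the argument by the decoupling observation above; extending it to arbitrary binding bounds would instead require an alternating-path (augmenting) argument rather than a single local swap, and I would flag this as the delicate point of the proof.
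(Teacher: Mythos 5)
Your proposal is correct and rests on the same basic idea as the paper's proof --- an exchange argument against an assumed optimal matching, driven by the strict inequalities in Definition \ref{Def:Dominating-set} --- but it is carried out differently and more carefully. The paper exchanges on the \emph{tag} side: it takes $\{u,v'\}\in\mathcal{M}^{opt}$, replaces it by $\{u,v\}$, and concludes from $\mathcal{W}_{uv}>\mathcal{W}_{uv'}$ that the weight strictly increases; this one-line swap never checks whether $v$ is already occupied by some $u''$ in $\mathcal{M}^{opt}$ (in which case adding $\{u,v\}$ violates the one-tag-per-slot constraint, and the corrected exchange has net change $\mathcal{W}_{uv}-\mathcal{W}_{uv'}-\mathcal{W}_{u''v}$, which need not be positive), nor does it discuss the per-tag capacities. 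You instead exchange on the \emph{slot} side (re-point $v$ from $u'$ to $u$), which is the feasible direction in a one-to-many matching since it changes only $v$'s single assignment, and you add two ingredients the paper omits: (i) the observation that the strict dominance makes $E_{d}$ a vertex-disjoint partial matching, so committing it consumes at most one unit of each $Bound_{u}$; and (ii) an explicit treatment of the saturated case, where you correctly note that the mutual-best inequalities alone do not control the cross term and that the lemma genuinely needs the bounds to be non-binding on $E_{d}$ at this stage. Your version therefore buys a feasibility-aware proof (and an honest caveat about the capacitated setting) at the cost of length, whereas the paper's version buys brevity at the cost of glossing over exactly the cases you flag. The one point to tighten: saturation must be ruled out for $u$ \emph{in the optimal matching being exchanged}, not merely in the algorithm's partial solution before Line $13$, so you should state the non-binding-bounds hypothesis as an assumption on the instance (e.g., $Bound_{u}\geq 1$ plus the decoupling regime) rather than as a property of the algorithm's state.
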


\begin{proof}
 Assume \(\mathcal{M}^{opt}\) is an optimal matching in graph \(\mathcal{G}\). Suppose there exists an edge \(\{u,v\} \in E_{d}\) not in \(\mathcal{M}^{opt}\). Let \(\{u,v'\} \in \mathcal{M}^{opt}\) with \(\mathcal{W}_{uv} > \mathcal{W}_{uv'}\). Consider \(\mathcal{M}^{'} = [\mathcal{M}^{opt} \setminus \{u,v'\}] \cup \{u,v\}\). Then \(\mathcal{W}(\mathcal{M}^{'}) > \mathcal{W}(\mathcal{M}^{opt})\), contradicting \(\mathcal{M}^{opt}\)'s optimality. Hence, all edges in \(E_{d}\) are in \(\mathcal{M}^{opt}\).
\end{proof}

\begin{lemma}\label{Lemma2}
Each billboard slot will be assigned at most one tag in the solution.
\end{lemma}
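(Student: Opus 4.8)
The plan is to prove the claim as a structural invariant of Algorithm \ref{Algo:OMBMP}, reasoning directly about the array $\mathcal{Q}$ that the algorithm returns. First I would observe that $\mathcal{Q}$ is a one-dimensional array indexed by the slots $v \in \mathcal{V}$, so that for every slot there is exactly one cell $\mathcal{Q}(v)$, and every write stores a single tag value $lc(v) \in \mathcal{U}$. Consequently, the only way a slot could end up carrying two distinct tags is if some assignment statement overwrote a previously written, non-empty cell $\mathcal{Q}(v)$. The entire argument therefore reduces to showing that no such overwrite ever occurs, i.e., that whenever a statement of the form $\mathcal{Q}(v) \leftarrow lc(v)$ executes, the cell $\mathcal{Q}(v)$ was still equal to $\emptyset$.

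Next I would formalize the loop invariant: once $\mathcal{Q}(v) \neq \emptyset$ for a slot $v$, that slot is deleted from the working set $\mathcal{V}$ and is never again visited by any of the loops that perform assignments. There are exactly two assignment sites in the algorithm, namely $\mathcal{Q}(v) = lc(v)$ inside the dominating-edge pass (Lines $8$--$12$) and $\mathcal{Q}(v) = lc(v)$ inside the augmenting \texttt{while} loop (Lines $13$--$22$). For the second site the guard is explicit: the \texttt{for} loop at Line No. $17$ ranges only over $v \in C_{u}$ with $\mathcal{Q}(v) = \emptyset$, so an already-assigned slot is skipped by construction. For the first site, I would argue that within a single pass of the outer \texttt{while} loop (Line No. $4$) each slot $v$ is processed exactly once by the \texttt{for} loop at Line No. $8$, so at most one write to $\mathcal{Q}(v)$ happens there; and any slot assigned during that pass is subsequently removed from $\mathcal{V}$ at Line No. $22$ before the outer loop repeats. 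Since the candidate sets $C_{v}$ and the set $E_{d}$ are rebuilt (Lines $5$--$7$) only from the surviving slots in $\mathcal{V}$, a removed, and hence assigned, slot can never re-enter either phase on a later iteration. Chaining these observations by induction on the number of executed assignment statements establishes the invariant and hence the lemma.

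The step I expect to be the main obstacle is the interaction between the two assignment phases within a single outer iteration, together with the deletion bookkeeping at Lines $21$--$22$. In particular I must rule out the scenario in which a slot receives a tag at Line No. $12$ during the dominating-edge pass and is then re-written at Line No. $20$ during the augmenting pass of the \emph{same} outer iteration; this is precisely where the $\mathcal{Q}(v) = \emptyset$ guard at Line No. $17$ does the work, since any such slot already has $\mathcal{Q}(v) \neq \emptyset$ and is therefore excluded from that \texttt{for} loop. I would also verify that the deletion at Line No. $22$ indeed fires for every assigned slot, including those assigned at Line No. $12$, so that the removal-from-$\mathcal{V}$ half of the invariant is maintained and no assigned slot survives into a later outer pass. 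Once these bookkeeping details are pinned down, the conclusion that each slot $v$ carries at most one tag in the returned array $\mathcal{Q}$ follows immediately.
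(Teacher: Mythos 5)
Your proof is correct and follows essentially the same route as the paper's: once $\mathcal{Q}(v)$ is set, the guard $\mathcal{Q}(v) = \emptyset$ at Line No.~17 blocks any further write, so no slot is ever reassigned. You are in fact more thorough than the paper, whose proof cites only the guarded assignment at Lines No.~17--20 and passes over the unguarded assignment at Line No.~12; your extra bookkeeping (each slot visited once per outer pass, assigned slots removed from $\mathcal{V}$ before the outer loop repeats) is exactly what covers that second assignment site.
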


\begin{proof}
Given a billboard slot \( v \in \mathcal{V} \) and a tag $u \in \mathcal{U}$, \( u \) is assigned to \( v \) only if \( \mathcal{Q}(v) = \emptyset \) (as stated in Line No. $17$ to $20$ of Algorithm \ref{Algo:OMBMP}). Once a tag \( u \) is assigned to a billboard slot \( v \), \( \mathcal{Q}(v) \) is set to \( u \). Since \( \mathcal{Q}(v) \neq \emptyset \) after this assignment, the condition \( \mathcal{Q}(v) = \emptyset \) will no longer hold for \( v \). Therefore, \( v \) cannot be reassigned to another tag and it defines $\forall ~v \in \mathcal{V}, \exists \text{ at most one } u \in \mathcal{U} \text{ such that } \mathcal{Q}(v) = u.$
This implies that each billboard slot \( v \) is assigned at most one tag in the solution.
\end{proof}

\begin{lemma}\label{Lemma3}
Each tag $ u_{i} \in \mathcal{U}$ will finally have exactly $Bound_{i}$ many billboard slots assigned to it.
\end{lemma}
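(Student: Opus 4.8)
The plan is to argue directly from the structure of Algorithm~\ref{Algo:OMBMP}, specifically from the interaction between the outer \texttt{while} loop (Line~4), the bound-checking condition (Line~18), and the reset step (Lines~23--24). First I would fix an arbitrary tag $u_i \in \mathcal{U}$ with its prescribed capacity $Bound_i$, and define $Count_{u_i}$ to be the running counter that tracks how many slots have already been assigned to $u_i$; the claim is equivalent to showing that the algorithm terminates only when $Count_{u_i} = Bound_i$ for every $i$. My strategy is to establish two complementary facts: (i) the algorithm never assigns more than $Bound_i$ slots to $u_i$, and (ii) it never assigns strictly fewer, because it cannot terminate while any tag is still under-subscribed and slots remain.

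For the upper bound (i), I would point to the guard $|count_{lc(v)}| < Bound_{lc(v)}$ in Line~18: a new slot is allocated to a tag only when its current count is strictly below the bound, so after each such allocation the count is at most $Bound_i$, and once it reaches $Bound_i$ the tag is barred from receiving further slots. This part is essentially a loop invariant: I would state that at every point in the execution, $Count_{u_i} \leq Bound_i$ holds for all $i$, and verify it is preserved by the only line that modifies the counters (the assignment in Lines~19--20). For the lower bound (ii), I would rely on the termination condition of the outer loop (Line~4), which runs until $|\mathcal{V}| = 0$, i.e.\ until every slot has been assigned, together with the reset mechanism in Lines~23--24 that restores $\mathcal{U} = \mathcal{U}'$ whenever the tag set is exhausted. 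The key accounting observation is that the total number of slots equals $\sum_i Bound_i$, so if the process consumes all slots and respects the per-tag upper bound, a counting argument forces each tag to receive \emph{exactly} $Bound_i$ slots: no tag can fall short without some other tag exceeding its bound, which (i) forbids.

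The hard part will be making the lower-bound argument rigorous, because the pseudocode's bookkeeping is somewhat informal: it is not immediately transparent that the $Bound_i$ values are chosen so that $\sum_i Bound_i = |\mathcal{V}|$ (this consistency of the capacities with the total slot count must be stated as a hypothesis or derived from how the bounds are set), nor that the reset in Lines~23--24 correctly re-enables a tag for further rounds without corrupting its counter. I would therefore make explicit the standing assumption that the capacities are feasible, namely $\sum_{u_i \in \mathcal{U}} Bound_i = |\mathcal{V}|$, and then show that under this assumption the two inequalities $Count_{u_i} \leq Bound_i$ and $\sum_i Count_{u_i} = |\mathcal{V}|$ (the latter coming from every slot being assigned exactly once, by Lemma~\ref{Lemma2}) can only be simultaneously satisfied by equality in each coordinate. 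Concretely, if some $Count_{u_j} < Bound_j$ strictly, then $\sum_i Count_{u_i} < \sum_i Bound_i = |\mathcal{V}|$, contradicting that all slots are assigned.

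\begin{proof}
Fix any tag $u_i \in \mathcal{U}$ and let $Count_{u_i}$ denote the number of slots assigned to $u_i$ at termination. We first observe the loop invariant $Count_{u_i} \leq Bound_i$: the only place a slot is assigned to a tag is Lines~19--20, which are executed only when the guard $|count_{lc(v)}| < Bound_{lc(v)}$ of Line~18 holds. Hence each assignment increments the counter of a tag whose count is strictly below its bound, so the count never exceeds $Bound_i$, and once $Count_{u_i} = Bound_i$ the tag is permanently excluded from further allocation. Summing over all tags gives $\sum_{u_i \in \mathcal{U}} Count_{u_i} \leq \sum_{u_i \in \mathcal{U}} Bound_i = |\mathcal{V}|$, where the final equality is the feasibility assumption on the capacities.

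Next, the outer \texttt{while} loop of Line~4 continues until $|\mathcal{V}| = 0$, i.e.\ until every slot has been removed in Line~22, which by construction happens precisely when that slot has been assigned a tag; the reset in Lines~23--24 guarantees that tags remain available to absorb all slots. By Lemma~\ref{Lemma2} each slot is assigned to exactly one tag, so the assignments partition $\mathcal{V}$ and therefore $\sum_{u_i \in \mathcal{U}} Count_{u_i} = |\mathcal{V}|$. Combining this with the per-tag upper bound, suppose for contradiction that $Count_{u_j} < Bound_j$ for some $j$. Then
\[
\sum_{u_i \in \mathcal{U}} Count_{u_i} < \sum_{u_i \in \mathcal{U}} Bound_i = |\mathcal{V}|,
\]
contradicting $\sum_{u_i \in \mathcal{U}} Count_{u_i} = |\mathcal{V}|$. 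Hence $Count_{u_i} = Bound_i$ for every $i$, i.e.\ each tag $u_i$ is finally assigned exactly $Bound_i$ many billboard slots.
\end{proof}
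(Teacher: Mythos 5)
Your proof is correct (under the hypothesis you add) and is genuinely more complete than the paper's. The paper's entire argument is one sentence: it points to the guard $|Count_{lc(v)}| < Bound_{lc(v)}$ on Line~18 and cites Lemma~\ref{Lemma2}, which at best establishes the upper bound $Count_{u_i} \leq Bound_i$ --- it never addresses why a tag could not end up with \emph{fewer} than $Bound_i$ slots, which is what the word ``exactly'' in the statement requires. You handle the upper bound the same way (via the Line~18 guard, as a loop invariant), but you supply the missing half: the termination condition $|\mathcal{V}|=0$ forces every slot to be assigned, so $\sum_i Count_{u_i} = |\mathcal{V}|$, and combined with the per-tag cap and the feasibility identity $\sum_i Bound_i = |\mathcal{V}|$ a pigeonhole argument forces equality in every coordinate. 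The price of your rigor is that you must posit that feasibility identity as a standing assumption --- the paper never defines the $Bound_i$ values or relates them to $|\mathcal{V}|$, and without some such hypothesis the lemma is simply false (if $\sum_i Bound_i > |\mathcal{V}|$ some tag must fall short). You are right to surface this as an explicit hypothesis rather than sweep it under the rug. One small wording correction: Lemma~\ref{Lemma2} gives only that each slot receives \emph{at most} one tag; the ``exactly one'' you invoke comes from combining that with the termination condition, which you do state but should attribute correctly.
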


\begin{proof}
In Algorithm \ref{Algo:OMBMP}, the allocation performed in Line No. $18$ $\mathcal{Q}_{v} = lc(v)$ happens when $|Count_{lc(v)}| < Bound_{lc(v)}$ and along with Lemma \ref{Lemma2} allows us to conclude this.
\end{proof}

\subsection{Tag Allocation Process}
The overall tag allocation procedure is shown in Algorithm \ref{Algo:TAP}. At first, in Line No. $1$, a set of billboard slots $(\mathcal{S})$ and a set of tags $(\mathcal{T})$ is provided as an input in the stochastic greedy \cite{ali2024influential}, and it returns a subset of slots and tags, i.e., $\mathcal{S}^{'} \subseteq \mathcal{S}$ and $\mathcal{T}^{'} \subseteq \mathcal{T}$ as output. Next, in Line No. $2$, the bipartite graph is generated using bipartite sets of tags and billboard slots. In-Line No. $3$, based on the $\theta$-score threshold value, a pruned bipartite graph $(\mathcal{G}^{'})$ is generated. In-Line No $4$, $\mathcal{G}^{'}$ is used in Algorithm \ref{Algo:OMBMP} as input, where it finds a mapping of the billboard slots to the tags and returns a multi-slot tag allocation.

\SetKwComment{Comment}{/* }{ */}
\begin{algorithm}
\scriptsize
 \KwData{The Trajectory Database $\mathcal{D}$, The Billboard Database $\mathcal{B}$, Tag Database $\mathbb{T}$, Context Specific Influence Probabilities, A set of Billboard Slots $(\mathcal{S})$, A set of Tags $(\mathcal{T})$, Two Positive Integers $k$ and $\ell$.}
\KwResult{An allocation of Tags to the billboard slots to maximize influence.}
$\mathcal{S}^{'}, \mathcal{T}^{'} \longleftarrow StochasticGreedy(\mathcal{S}, \mathcal{T})$ \cite{ali2024influential} \Comment*[r]{$\mathcal{S}^{'} \subseteq \mathcal{S}$ with $|\mathcal{S}^{'}|=k$ and $\mathcal{T}^{'} \subseteq \mathcal{T}$ with $|\mathcal{T}^{'}|=\ell$}
$\mathcal{G}(\mathcal{S^{'}, \mathcal{T}^{'}}, \mathcal{E}_{b}) \longleftarrow \text{Generate a Bipartite Graph}$\;
$\mathcal{G}^{'}(\mathcal{S^{''}, \mathcal{T}^{''}}, \mathcal{E}_{b}^{'}) \longleftarrow \text{ Generate Pruned Bipartite Graph}$\;
$\mathcal{Q} \longleftarrow OMBM(\mathcal{G}^{'})$ \Comment*[r]{Algorithm \ref{Algo:OMBMP}}
return $\mathcal{Q}$\;
 \caption{Tag Allocation Algorithm for Multi-slot Tag Allocation}
 \label{Algo:TAP}
\end{algorithm}
\vspace{-0.28cm}

\begin{theorem}
Algorithm \ref{Algo:OMBMP} gives an approximation ratio of $\rho \leq 1 + \max_{1 \leq i \leq m} (\mathcal{K}_i - \delta_i)$, where $\delta_{i} \in \{0,1\}$ and $\mathcal{K}_{i}$ is the set of slots assigned to tag $u_{i}$.
\end{theorem}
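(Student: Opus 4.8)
The plan is to bound the approximation ratio $\rho = \mathcal{W}(\mathcal{M}^{opt}) / \mathcal{W}(\mathcal{M}^{ALG})$ through a charging argument that compares, slot by slot, the weight the algorithm collects against the weight an optimal assignment collects. Write $OPT = \mathcal{W}(\mathcal{M}^{opt})$ for an optimal one-to-many matching and $ALG = \mathcal{W}(\mathcal{M}^{ALG})$ for the matching $\mathcal{Q}$ returned by Algorithm \ref{Algo:OMBMP}. Since every slot is assigned at most one tag (Lemma \ref{Lemma2}), both quantities decompose as sums over slots, $ALG = \sum_{v} \mathcal{W}(u^{ALG}_v, v)$ and $OPT = \sum_v \mathcal{W}(u^{opt}_v, v)$, where $u^{ALG}_v$ and $u^{opt}_v$ denote the tag assigned to slot $v$ in each solution. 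The target is to establish $OPT \le \big(1 + \max_{1 \le i \le m}(\mathcal{K}_i - \delta_i)\big)\, ALG$, which is exactly the claimed bound after dividing by $ALG$.

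First I would isolate the slots on which the two solutions already agree. By Lemma \ref{Lemma1}, every dominating edge placed into $E_{d}$ before Line No. $13$ also lies in $\mathcal{M}^{opt}$, so on those slots the two solutions contribute identical weight and cannot open any gap. I let $\delta_i \in \{0,1\}$ record whether tag $u_i$ received such a dominating edge, and I observe that, by Definition \ref{Def:Dominating-set}, a dominating edge is the \emph{unique heaviest} edge incident to $u_i$, so its weight dominates that of every other edge at $u_i$.

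Next I would analyze the \emph{loss slots}, namely those $v$ with $\mathcal{W}(u^{opt}_v, v) > \mathcal{W}(u^{ALG}_v, v)$. Because Algorithm \ref{Algo:OMBMP} assigns each slot its locally best \emph{available} tag through the $lc(\cdot)$ routine, a strict loss at $v$ can occur only when the optimal tag $u_i = u^{opt}_v$ was already saturated (its count had reached $Bound_i$) when $v$ was processed; otherwise $v$ would have been matched to $u_i$ or to something at least as heavy. I then charge the loss at $v$ to the blocking tag $u_i$. Since $u_i$ is assigned exactly $\mathcal{K}_i$ slots (Lemma \ref{Lemma3}) and its dominating slot (when $\delta_i = 1$) is already shared with the optimum, at most $\mathcal{K}_i - \delta_i$ distinct loss slots are charged to $u_i$, and each such loss obeys $\mathcal{W}(u_i, v) \le \max_{v'} \mathcal{W}(u_i, v') \le ALG_i$, where $ALG_i$ is the weight the algorithm assigns to $u_i$.

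Summing the charges gives $OPT - ALG \le \sum_{\text{loss } v}\big(\mathcal{W}(u^{opt}_v,v) - \mathcal{W}(u^{ALG}_v,v)\big) \le \sum_{i}(\mathcal{K}_i - \delta_i)\,\max_{v'}\mathcal{W}(u_i,v')$; bounding each per-tag factor by the overall maximum and using $\sum_i \max_{v'}\mathcal{W}(u_i,v') \le \sum_i ALG_i = ALG$ yields $OPT - ALG \le \max_i(\mathcal{K}_i - \delta_i)\cdot ALG$, hence $OPT \le (1 + \max_i(\mathcal{K}_i - \delta_i))\,ALG$. The hard part will be the middle step: rigorously justifying that a strict loss at $v$ forces the optimal tag to have been saturated, and that the charging is injective enough that no more than $\mathcal{K}_i - \delta_i$ loss slots land on any single tag, especially in the case $\delta_i = 0$ where the global heaviest edge of $u_i$ need not belong to $\mathcal{M}^{ALG}$. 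This requires tracking the iteration order of the outer \texttt{while} loop, the re-initialization of $\mathcal{U}$ in Line No. $23$--$24$, and the capacity bookkeeping $|count_{lc(v)}| < Bound_{lc(v)}$, so that the greedy and dominating local choices translate into the stated global weight comparison.
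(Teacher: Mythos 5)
Your proposal takes a genuinely different route from the paper, and it contains a gap that you yourself flag but that is in fact fatal rather than merely technical. The paper's own argument never compares edge \emph{weights} at all: it counts edges, bounding the algorithm's ``cost'' by $C_{A} \leq \sum_{i}(|\mathcal{K}_i| + 1 - \delta_i)$ against $C_{opt} = \sum_i |\mathcal{K}_i|$, takes the ratio of these counts, and then weakens $\max_i \frac{1-\delta_i}{|\mathcal{K}_i|}$ up to $\max_i(\mathcal{K}_i - \delta_i)$. You instead interpret $\rho$ as a ratio of total influence weights and run a per-slot charging argument. That is the more natural reading of ``approximation ratio'' for an influence-maximization objective, but it commits you to proving statements about weights that the paper's counting argument never needs.

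The fatal step is your claim that a strict loss at a slot $v$ (i.e.\ $\mathcal{W}(u^{opt}_v,v) > \mathcal{W}(u^{ALG}_v,v)$) can occur \emph{only} when the optimal tag was already saturated at $Bound_i$. Algorithm \ref{Algo:OMBMP} does not assign each slot its best available tag subject to capacity; it adds an edge only under the mutual-best condition $lc(lc(v)) = v$ (Lines $10$ and $18$). A slot $v$ can therefore be denied its heaviest incident tag $u_i$ simply because $u_i$'s own best slot is some other $v'$, with no capacity constraint involved, and $v$ then waits for a later iteration in which it may end up with a much lighter tag. Your charging scheme has no home for these losses, so the bound ``at most $\mathcal{K}_i - \delta_i$ loss slots charged to $u_i$'' does not follow. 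A second gap, which you acknowledge, is the inequality $\max_{v'}\mathcal{W}(u_i,v') \leq ALG_i$: this holds only when $u_i$ actually receives its dominating edge ($\delta_i = 1$); when $\delta_i = 0$ the algorithm may collect strictly less than $u_i$'s heaviest edge weight, and then the final step $\sum_i \max_{v'}\mathcal{W}(u_i,v') \leq ALG$ fails. To salvage a weight-based proof you would need either a lower bound on $ALG_i$ in terms of the edges $u_i$ loses (a local-ratio or vertex-cover--style argument as in approximation proofs for greedy weighted matching), or you would have to retreat to the paper's purely combinatorial edge count, which sidesteps weights entirely at the price of proving a statement about cardinalities rather than influence.
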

\begin{proof}
As mentioned previously, in Algorithm \ref{Algo:OMBMP} Line No. $13$ to $20$ performs a matching of tag \(u_i\) to \((|\mathcal{K}_i| - \delta_i)\) slots where $\delta_{i} \in \{0,1\}$~ for $i = 1, 2, \ldots, \ell$. Next, $lc(v) = u$ represents the tag $u_{i}$ is matched with slot $v_{j}$ where $v \in \mathcal{V}$ as given in Line No. $18$ $v = lc(u)= lc(lc(v))$ is satisfied. If $\{v,u\}$ is not part of optimal matching, however, it is added to the $E_{d}$ then at most $(1 + \mathcal{K}_{i} - \delta_{i})$ many edges may not be considered. The cost for assigning slots to a tag \(t_i\) will be at most \(|\mathcal{K}_i| + 1 - \delta_i\). So, the total cost \(C_{\text{A}}\) can be approximated by summing the costs over all tags \(t_i\) i.e., $C_{\text{A}} \leq \sum_{i=1}^m (|\mathcal{K}_i| + 1 - \delta_i)$ and the optimal solution has a cost \(C_{\text{opt}} = \sum_{i=1}^m |\mathcal{K}_i|\). So we can write $\rho = \max \left( \frac{\sum_{i=1}^m (|\mathcal{K}_i| + 1 - \delta_i)}{\sum_{i=1}^m |\mathcal{K}_i|} \right)$.
Since \(|\mathcal{K}_i|\) are the optimal slots, $\rho \leq 1 + \max_{1 \leq i \leq m} \left( \frac{1 - \delta_i}{|\mathcal{K}_i|} \right)$. The term \(\max_{1 \leq i \leq m} \left( \frac{1 - \delta_i}{|\mathcal{K}_i|} \right)\) simplifies to \((1 - \delta_i)\) for the worst-case scenario where \(|\mathcal{K}_i| = 1\) and hence $\rho \leq 1 + \max_{1 \leq i \leq m} (\mathcal{K}_i - \delta_i)$.
\end{proof}

\paragraph{\textbf{An Illustrative Example}}\label{Example}
 Consider there are ten billboard slots $\mathcal{V} = \{b_{0},b_{1},\ldots, $ $ b_{9}\}$, three tags $\mathcal{U} = \{t_{0}, t_{1}, t_{2}\}$ with corresponding influence value as output after applying stochastic greedy approach \cite{ali2024influential} on the billboard, and tag datasets as shown in Table \ref{ETable:1}, \ref{ETable:2}. Next, a bipartite graph $\mathcal{G}$ and its corresponding $10 \times 3$ weight matrix is generated. Now, based on the $\theta$-score value described in Section \ref{ConstructionOfGraph}, a new bipartite graph $\mathcal{G}^{'}$ with an updated weight matrix is generated. We initialize an empty $(Initialize~ to  ~-1)$ 1-D array $\mathcal{Q}$ to store the allocation as shown in Table \ref{ETable:1D_Array}. In the first iteration, the best allocations for the slots to the tags are as follows: $\{(0 \rightarrow 2), (1 \rightarrow 2), (3 \rightarrow 2), (4 \rightarrow 2), (5 \rightarrow 2), (6 \rightarrow 2), (7 \rightarrow 2), (8 \rightarrow 2), (9 \rightarrow 2)\}$, i.e., all the slots are allocated to only one tag $t_{2}$ and the allocation for the tag to the slot are $\{ (0 \rightarrow 4), (1 \rightarrow 4), (2 \rightarrow 4)\}$, i.e., slot $b_{4}$ is the best match for $t_{0}, t_{1}$ and $t_{2}$ and $b_{4}$ is matched to tag $t_{2}$ as shown in Table \ref{ETable:updated-1D_Array}. The edge weight of all the edges from slot $b_{4}$ to the tags is set to be $0$, and the influence of tag $t_{2}$ is set to $0$. Next, the matching is performed between tags $t_{0}, t_{1}$ and all the slots except $b_{4}$. Similarly, slots $b_{9}$ matched with $t_{1}$ and $b_{2}$ is matched with $t_{0}$. After this, all the tag's influence is $0$, iteration $1$ is completed, and updated allocation is shown in Table \ref{ETable:updated-1D_Array}. Still, some slots are unallocated. So, assign the initial influence value for all the tags and repeat the process till improvement in allocation occurs. The final allocation of slots to tags is as follows: $\{(2 \rightarrow 0), (4 \rightarrow 2), (5 \rightarrow 0), (7 \rightarrow 2), (8 \rightarrow 1), (9 \rightarrow 1)\}$ as shown in Table \ref{ETable:Final-1D_Array}. 

\begin{table}[!h]
\begin{center}
\begin{minipage}{0.5\textwidth}
\begin{center}
   \begin{tabular}{| c | c | c | c | c | c | c | c | c | c | c |}
   \hline
   $\mathcal{V}$ & $b_{0}$ & $b_{1}$ & $b_{2}$ & $b_{3}$ & $b_{4}$ & $b_{5}$ & $b_{6}$ & $b_{7}$ & $b_{8}$ & $b_{9}$  \\ \hline
   $I(b_{i})$ & 0.1 & 0.2 & 0.7 & 0.1 & 0.9 & 0.4 & 0.1 & 0.5 & 0.45 & 0.7 \\ \hline
   \end{tabular}
   \caption{\label{ETable:1} Billboard Influence}
\end{center}
\end{minipage}\hfill
\begin{minipage}{0.5\textwidth}
\begin{center}
   \begin{tabular}{| c | c | c | c |}
   \hline
   $\mathcal{U}$ & $t_{0}$ & $t_{1}$ & $t_{2}$ \\ \hline
   $I(b_{i})$ & 0.1 & 0.4 & 0.5 \\ \hline
   \end{tabular}
   \caption{\label{ETable:2} Tag Influence}
\end{center}
\end{minipage}
\end{center}
\end{table}


\begin{table}[!h]
\begin{center}
   \begin{tabular}{| c | c | c | c | c | c | c | c | c | c | c |}
   \hline
   $\mathcal{Q}$ & $-1$ & $-1$ & $-1$ & $-1$ & $-1$ & $-1$ & $-1$ & $-1$ & $-1$ & $-1$ \\ \hline
   \end{tabular}
   \caption{\label{ETable:1D_Array} Initial Allocation (1-D Array)}
\end{center}
\end{table}

\begin{table}[!h]
\begin{center}
\begin{minipage}{0.5\textwidth}
\begin{center}
   \begin{tabular}{| c | c | c | c | c | c | c | c | c | c | c |}
   \hline
   $\mathcal{Q}$ & $-1$ & $-1$ & $-1$ & $-1$ & $2$ & $-1$ & $-1$ & $-1$ & $-1$ & $-1$ \\ \hline
   \end{tabular}
   \caption{\label{ETable:updated-1D_Array} Updated Allocation (After $1^{st}$ iteration)}
\end{center}
\end{minipage}\hfill
\begin{minipage}{0.5\textwidth}
\begin{center}
   \begin{tabular}{| c | c | c | c | c | c | c | c | c | c | c |}
   \hline
   $\mathcal{Q}$ & $-1$ & $-1$ & $0$ & $-1$ & $2$ & $0$ & $-1$ & $2$ & $1$ & $1$ \\ \hline
   \end{tabular}
   \caption{\label{ETable:Final-1D_Array} Final Allocation}
\end{center}
\end{minipage}
\end{center}
\end{table}
\vspace{-1.99cm}

\section{Experimental Evaluation}\label{Sec:Experimental_Evaluation}
In this section, we describe the experimental evaluation of the proposed solution approach. Initially, we start by describing the datasets. 
\paragraph{\textbf{Dataset Description.}}
We use two datasets for our experiments, previously utilized in various trajectory data analytics studies \cite{yang2014modeling, ali2023influential,zhang2020towards}. The New York City (NYC) Dataset, collected from April 12, 2012, to February 16, 2013, includes 227,428 check-ins containing user ID, location name, timestamps, and GPS coordinates \footnote{\url{https://www.nyc.gov/site/tlc/about/tlc-trip-record-data.page}}. The VehDS-LA (Vehicle Dataset in Los Angeles) consists of 74,170 samples from 15 streets, featuring data like user ID, street name, latitude, longitude, and timestamps \footnote{\url{https://github.com/Ibtihal-Alablani}}. Further, we have created tag datasets from trajectory datasets of NYC and VehDS-LA containing tag names and tag influences to do our experiments. Additionally, we use billboard data from LAMAR \footnote{\url{http://www.lamar.com/InventoryBrowser}}, including billboard ID, latitude, longitude, timestamp, and panel size. The NYC dataset contains 716 billboards (1031040 slots), and the LA dataset includes 1483 billboards (2135520 slots).

\paragraph{\textbf{Experimental Setup.}} All the key parameters used in our experiments are summarized in Table \ref{Key-parameters}, and the default settings are highlighted in bold. The parameters $k$ and $\ell$ denote the number of slots and tags, respectively, whereas $\epsilon$ decides the sample set size in the stochastic greedy \cite{ali2024influential} from which maximum influential slots and tags are chosen. The user-defined parameter $\theta$ controls the $\theta$-score threshold, and $\lambda$ represents the maximum distance a slot can influence trajectories. 
\begin{table}[h!]
\caption{\label{Key-parameters} Key Parameters}
\vspace{-0.15 in}
\begin{center}
    \begin{tabular}{ | p{2cm}| p{5.5cm}|}
    \hline
    Parameter & Values  \\ \hline
    $k$ & $100, 150, 200, 250, \textbf{300}$   \\ \hline
    $\ell$ & $25, 50, 75, \textbf{100}, 125$   \\ \hline
    $\epsilon$ & $\textbf{0.01}, 0.05, 0.1, 0.15, 0.2$ \\ \hline
    $\theta$ & -2, $\textbf{-1}, 0, 1, 2$  \\ \hline
    $\lambda$ & $25m,50m,\textbf{100m},125m,150m$  \\ \hline
    \end{tabular}
\end{center}
\end{table}

\par All codes are implemented in Python and executed on an HP Z4 workstation with 64 GB memory and an Xeon(R) 3.50 GHz processor. All the proposed and baseline methods are demonstrated for their effectiveness and efficiency. All codes are executed five times, and average results are reported. Due to the anonymity constraint, we are not putting the Github Link of our implementations which we will do during camera ready submission.

\paragraph{\textbf{Goals of our Experiments.}} \label{Sec:Research_Questions}
In this study, we address the following Research Questions (RQ).
\begin{itemize}
\item \textbf{RQ1}: How does the number of matching tags to slots increase if we increase the number of slots and tags selected?
\item \textbf{RQ2}: If we increase the number of slots and tags, how do the computational time requirements of the proposed methods change? 
\item \textbf{RQ3}: Varying $\theta$-score, how does the number of matched tags and slots vary?
\item \textbf{RQ4}: How does the influence vary before and after allocating tags to slots?
\end{itemize}

\paragraph{\textbf{Baseline Methods.}}
 One tag can be assigned to multiple slots in all the baseline methods; however, the opposite is restricted. Now, we will discuss different baseline methods to compare with our proposed approach as follows:

\begin{itemize}
\item \textbf{Brute-force Method (BM).} In this approach, for each billboard slot, we search for the best match in each tag based on the edge weight (e.g., influence value), and tags are assigned to slots.
\item \textbf{Max Degree Allocation (MDA).} At first, tags are sorted in descending order based on the maximum degree. Then, for each slot, check the degree of the tags connected to the slot and the maximum degree tag allocated to that slot.
\item \textbf{Top-$k$ Slot and Random Tag (TSRT).} First, each slot is sorted in descending order based on individual influence value. Next, tags are allocated to the sorted slots randomly.
\item \textbf{Random Allocation (RA).} In this approach, tags are allocated to slots uniformly at random.
\end{itemize}

\paragraph{\textbf{Experimental Results and Discussions.}}
This section will discuss the experimental results of the solution methodologies
and address the research questions mentioned in Section \ref{Sec:Research_Questions}.

\paragraph{Tags, Slots Vs. No. of Matching.}
Figure \ref{Fig:Plot}(a,b) and \ref{Fig:Plot}(g,h) show the effectiveness of the baseline and proposed methods in the NYC and LA datasets, respectively. We have three main observations. With the fixed number of tags $(100)$, when the number of slots varies from $100$ to $300$, the number of matched tags and slots increase for baseline and proposed methods, as shown in Figure \ref{Fig:Plot}(a,g) and \ref{Fig:Plot}(b,h). Second, when we increase the number of slots in the `BM' and `MDA' approaches, all the slots are matched with only one tag, and this happens because there is one most influential tag that has a maximum degree in the bipartite graph. Among the baseline methods, `RA' and `TSRT' perform well. The `OMBM' has a smaller number of matched slots because in every iteration, a tag can only be allocated to a slot if the slot is the best match for a tag, and that tag is also the best for the slot. The `OMBM' approach in the LA dataset has fewer matched slots than `RA' and `TSRT' by $2.29 \%$ to $3\%$ and $0.76\%$ to $2\%$, respectively. In the NYC dataset, fewer matched slots are around $10.70\%$ compared to the baseline methods. Third, in the NYC dataset, the `OMBM' outperforms all the baseline methods regarding the number of matched tags, while in the LA dataset, the number of matched tags in `OMBM' is less. This happens because in NYC, the $\theta$-score threshold value is very low compared to the LA dataset, and for this reason, the number of edges that remain in the bipartite graph is huge, and more tags are matched.

\begin{figure*}[!ht]
\centering
\begin{tabular}{ccc}
\includegraphics[scale=0.17]{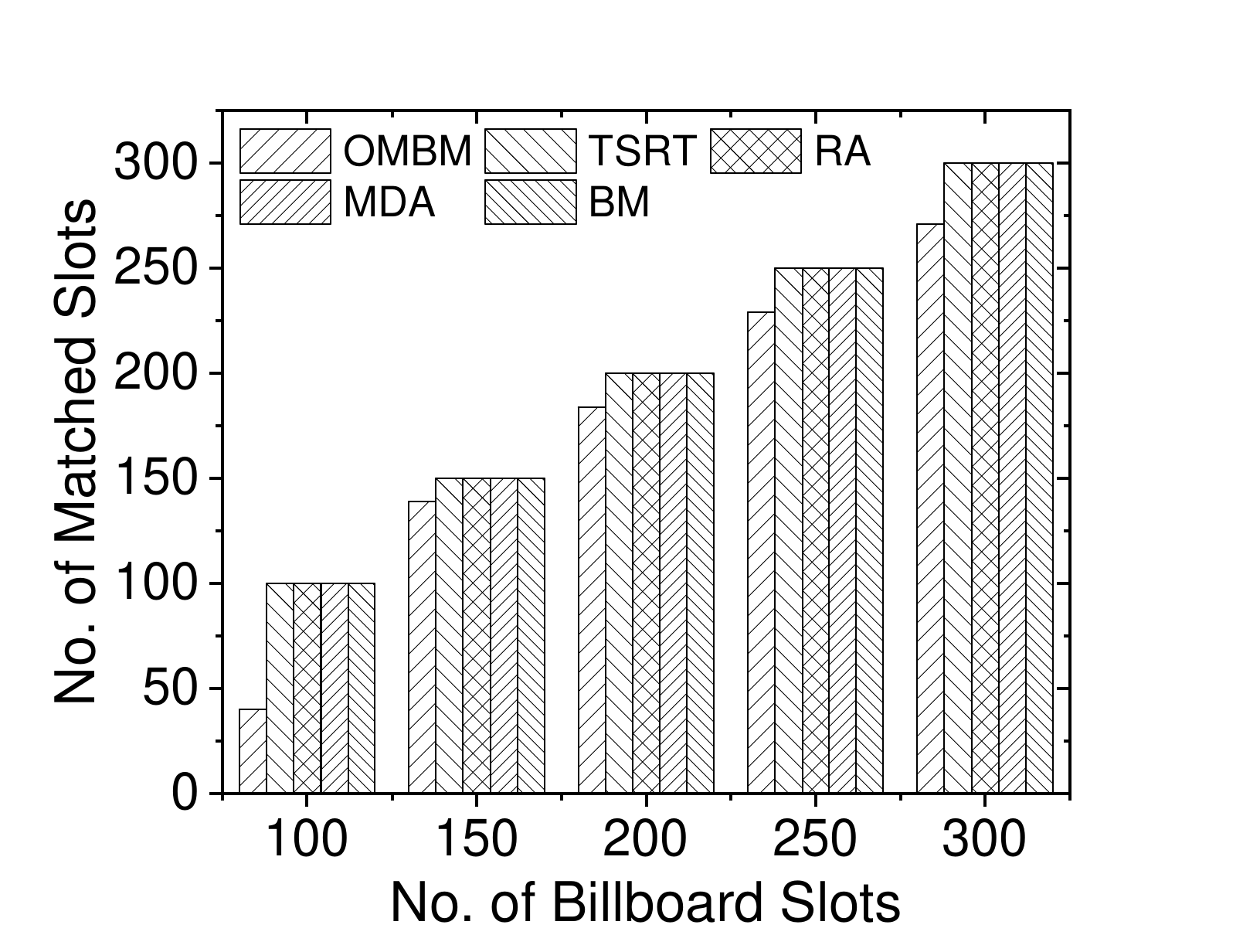} & \includegraphics[scale=0.17]{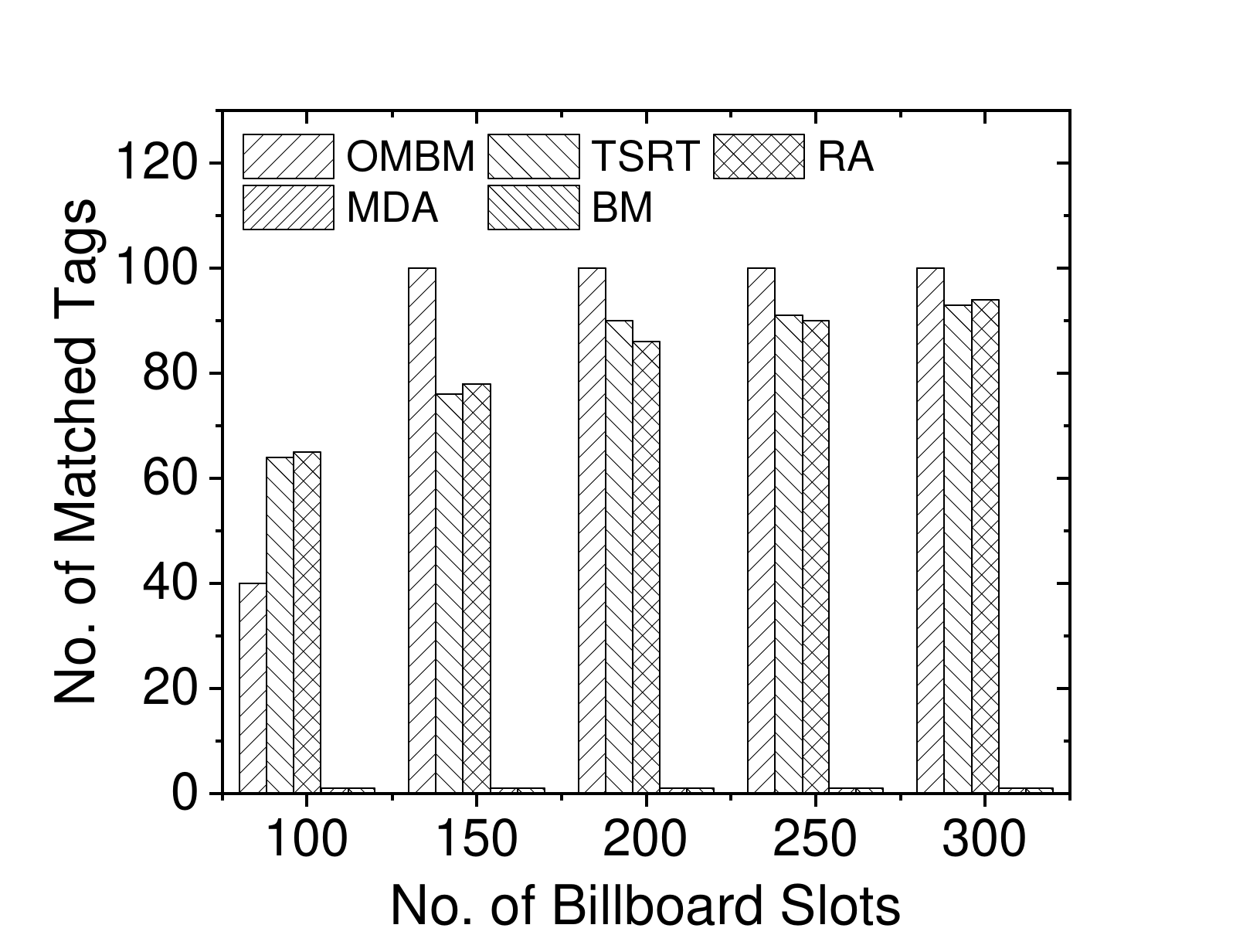} & \includegraphics[scale=0.17]{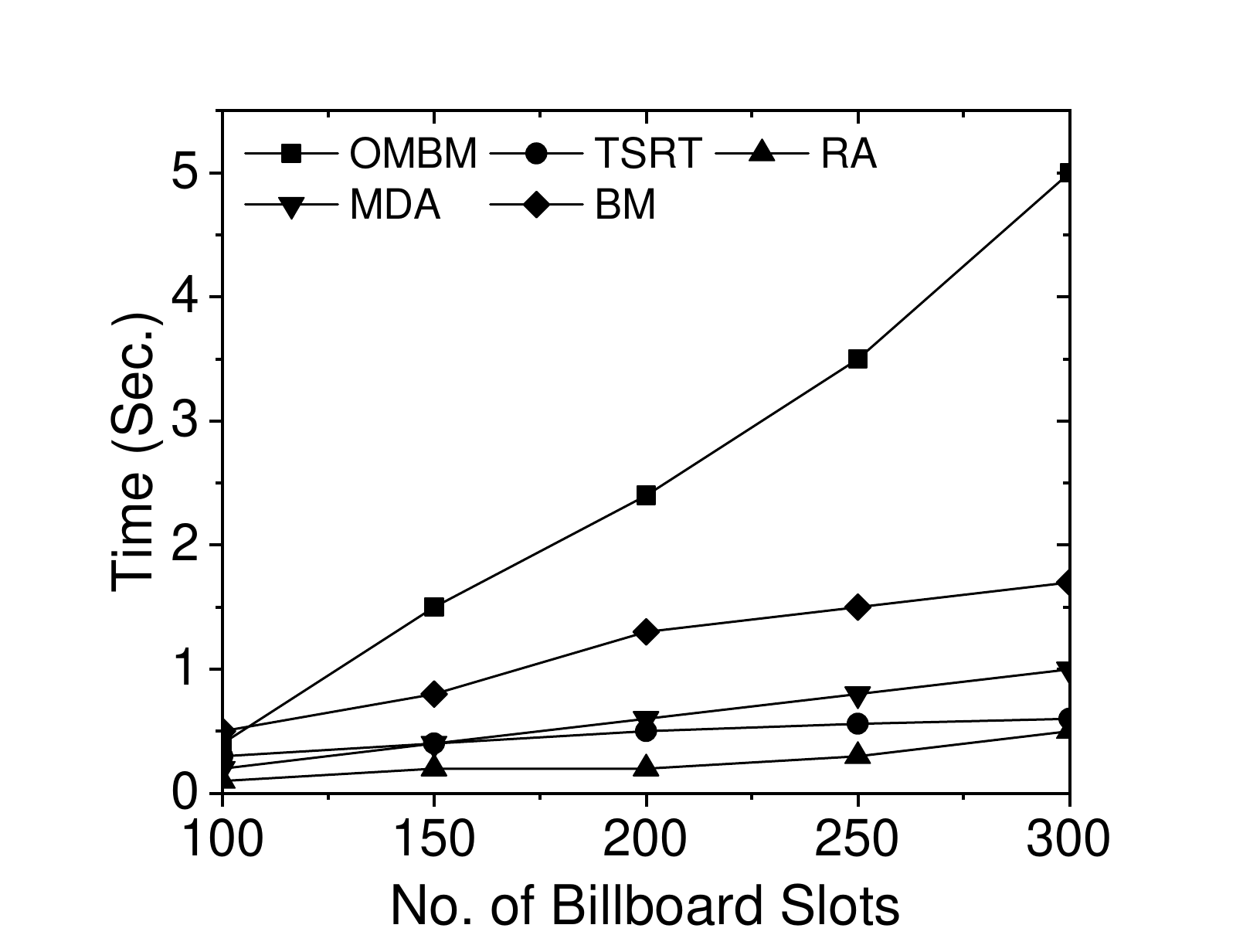} \\
(a) Matched Slots in NYC &  (b) Matched Tags in NYC & (c) Runtime in NYC  \\
\includegraphics[scale=0.17]{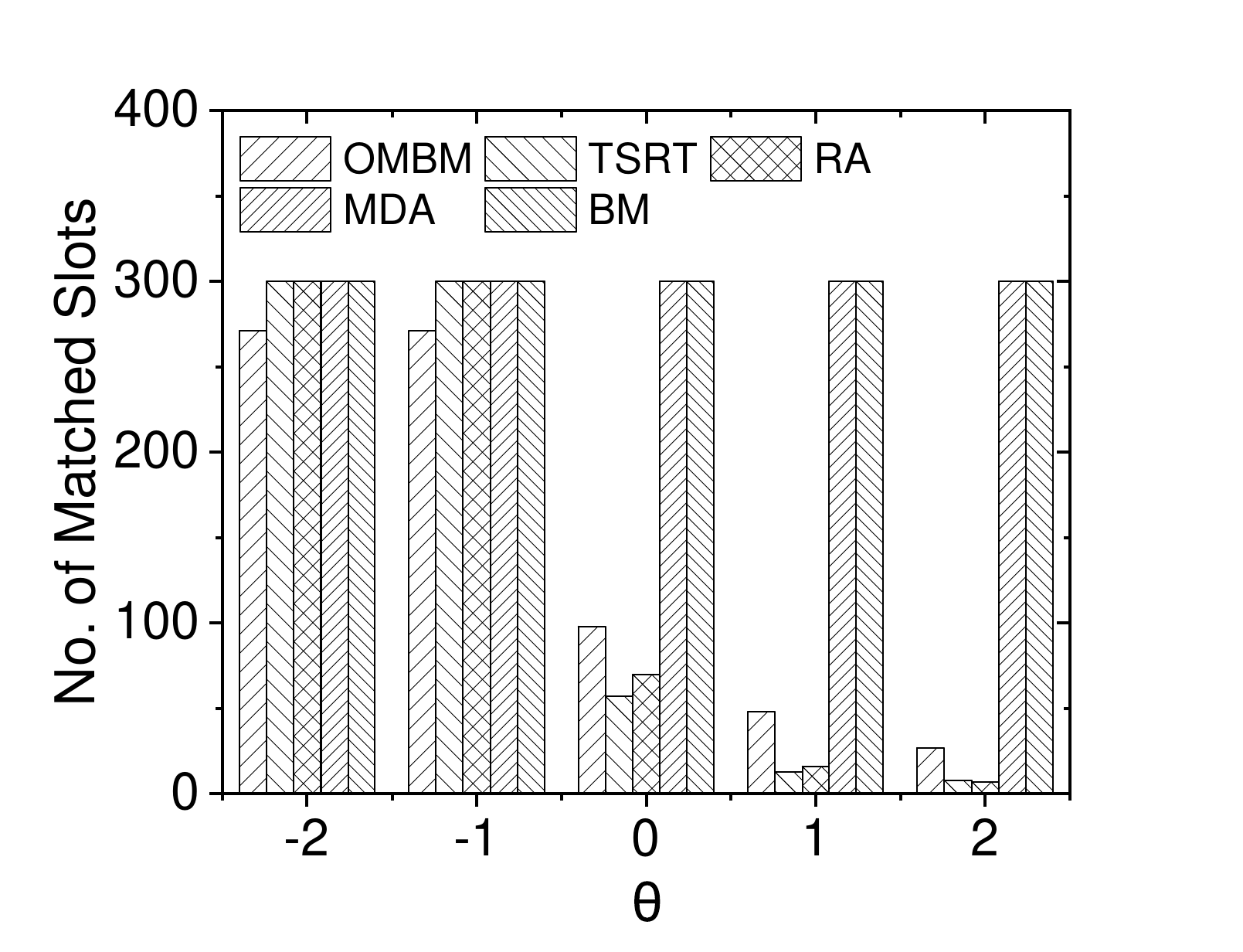} & \includegraphics[scale=0.17]{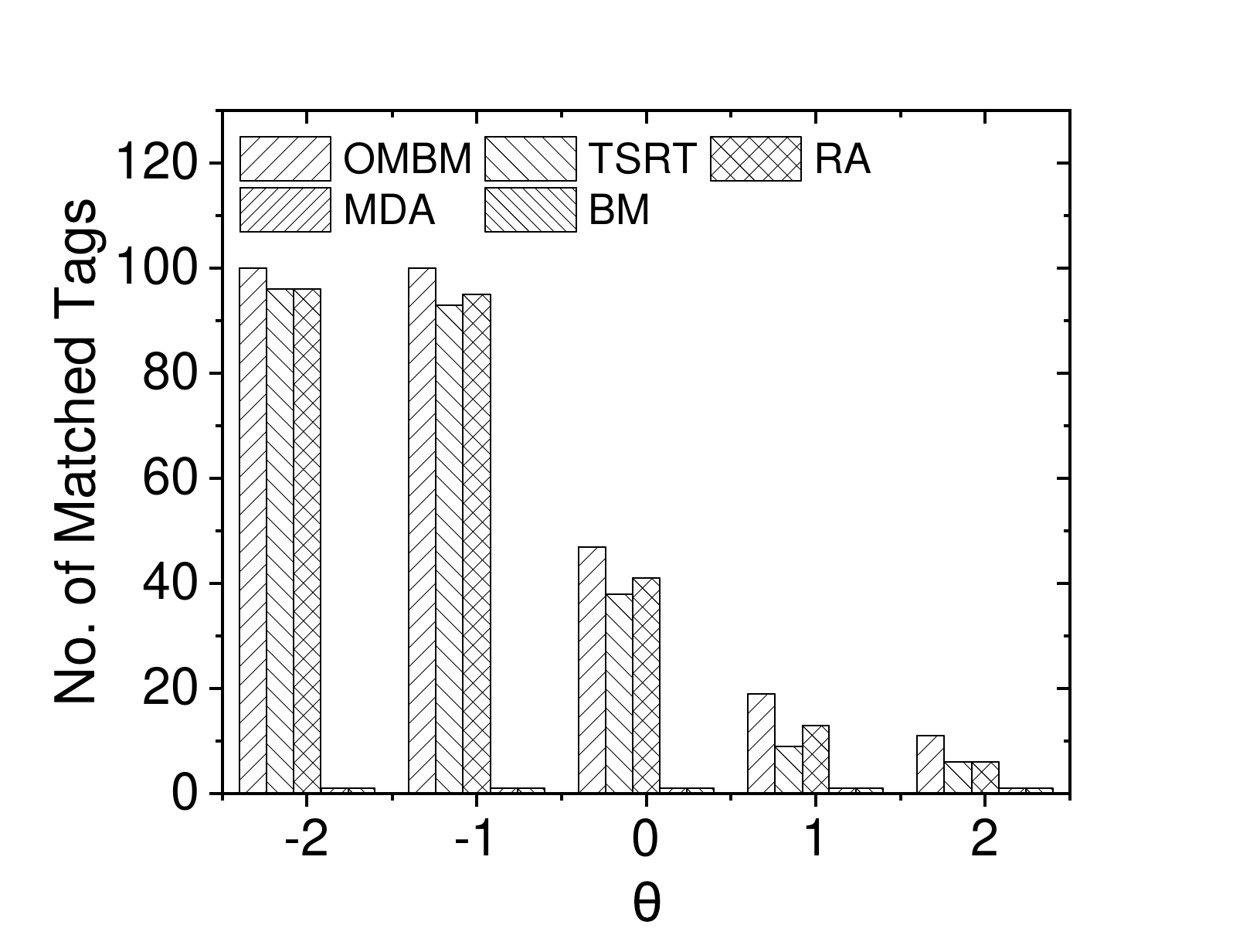} & \includegraphics[scale=0.17]{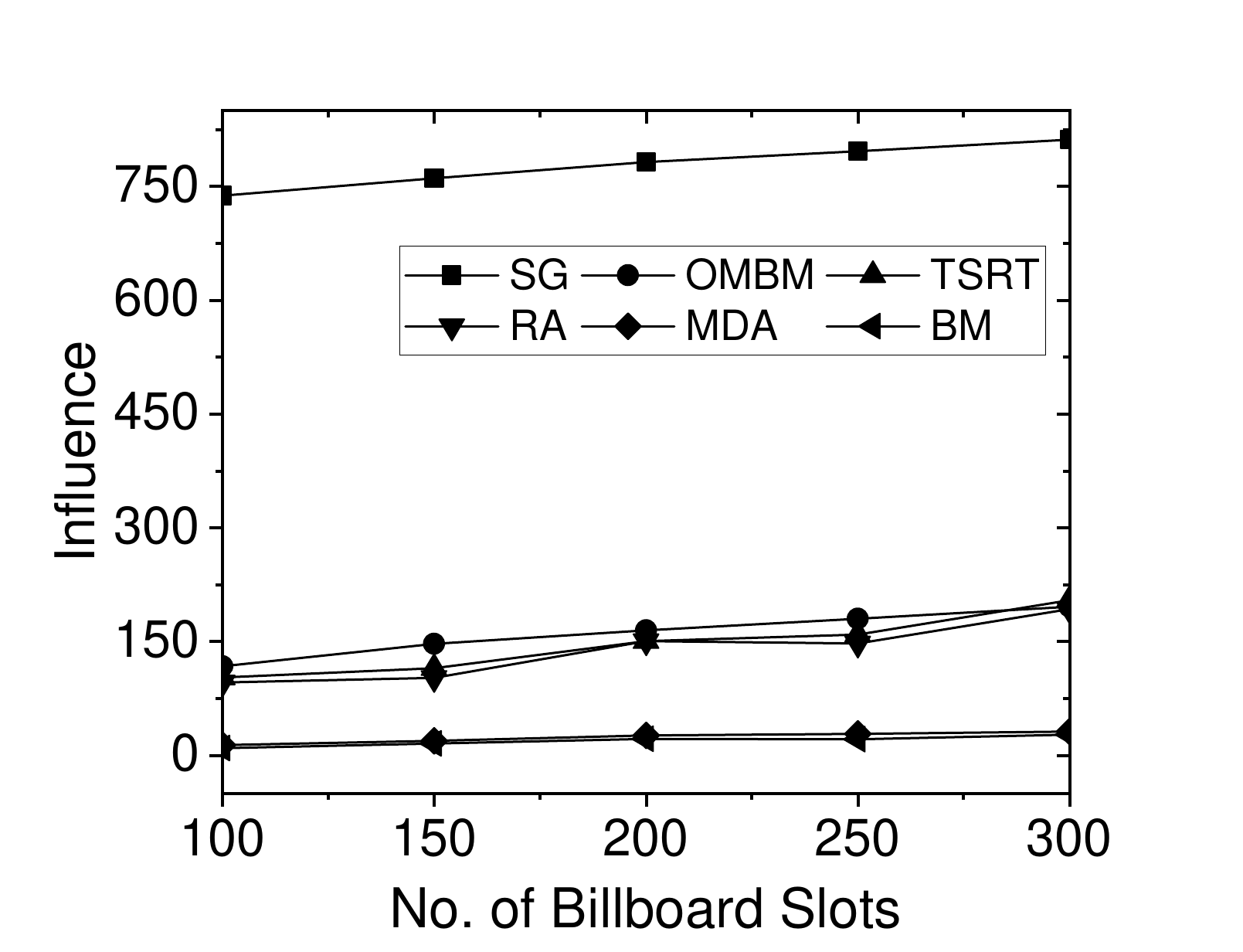}  \\
(d) Varying $\theta$ Vs. Slots (NYC) & (e)  Varying $\theta$ Vs. Tags (NYC)  & (f) Slots Vs. Influence (NYC) \\
\includegraphics[scale=0.17]{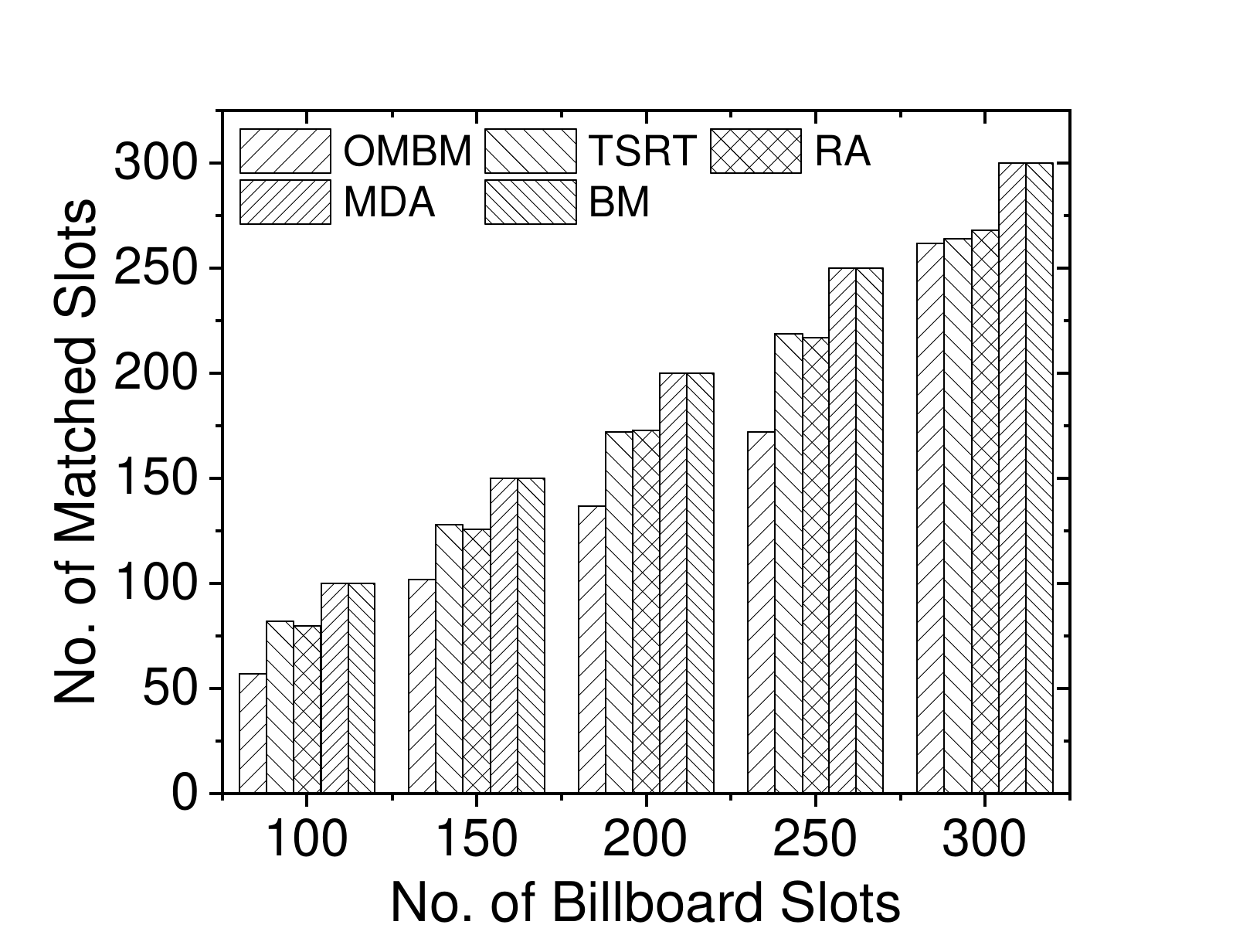} & \includegraphics[scale=0.17]{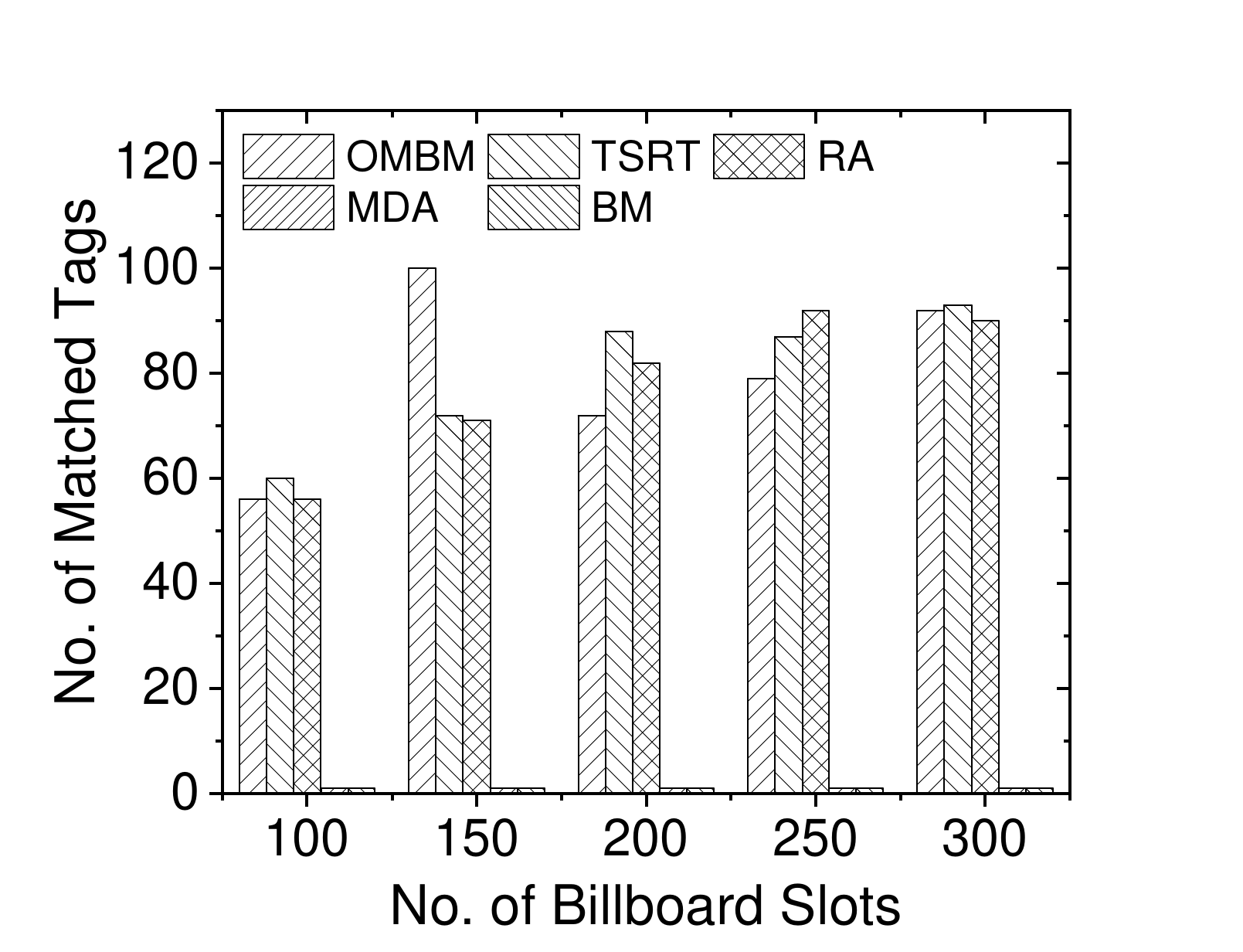}  & \includegraphics[scale=0.17]{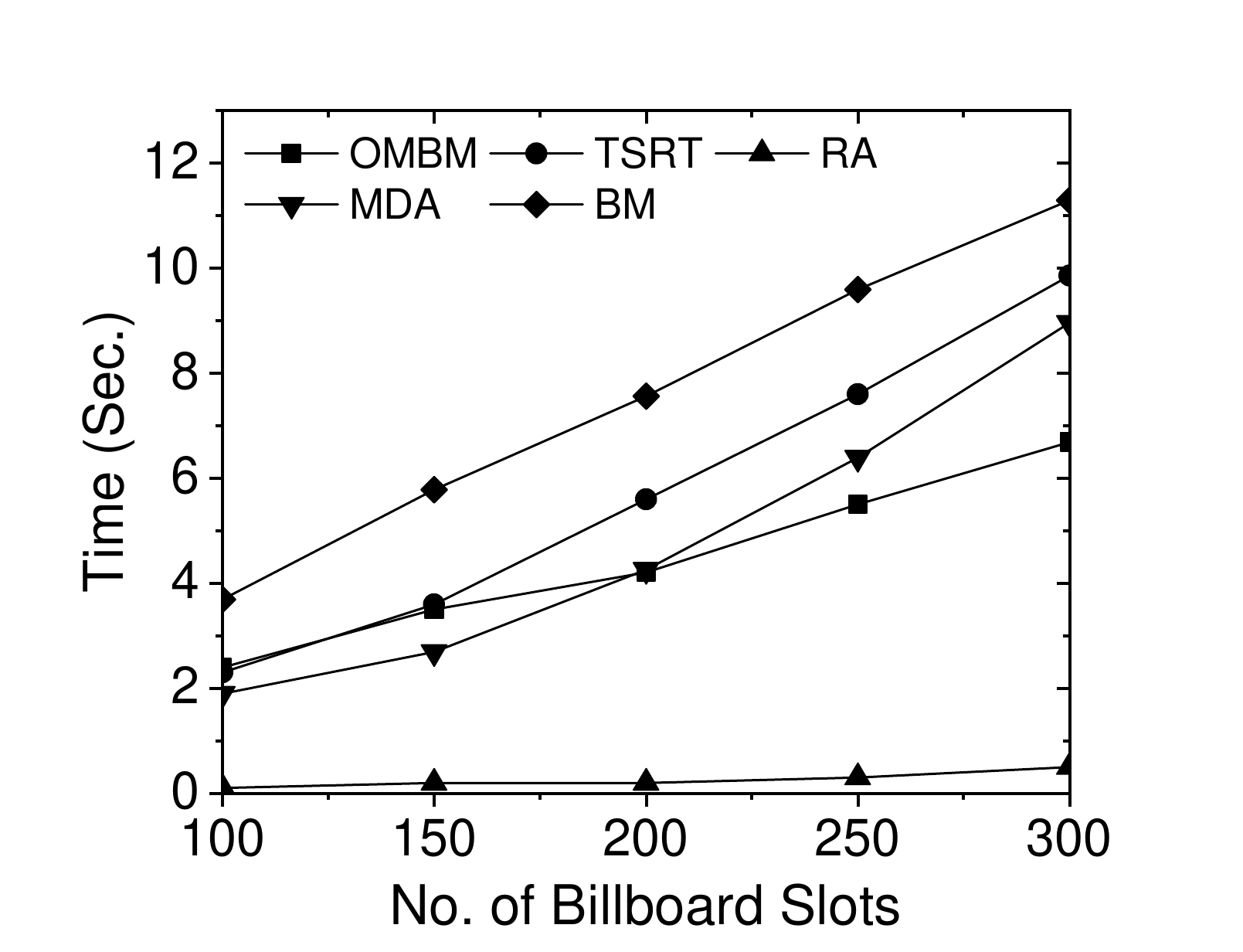} \\
(g) Matched Slots in LA & (h) Matched Tags in LA & (i) Runtime in LA \\
\includegraphics[scale=0.17]{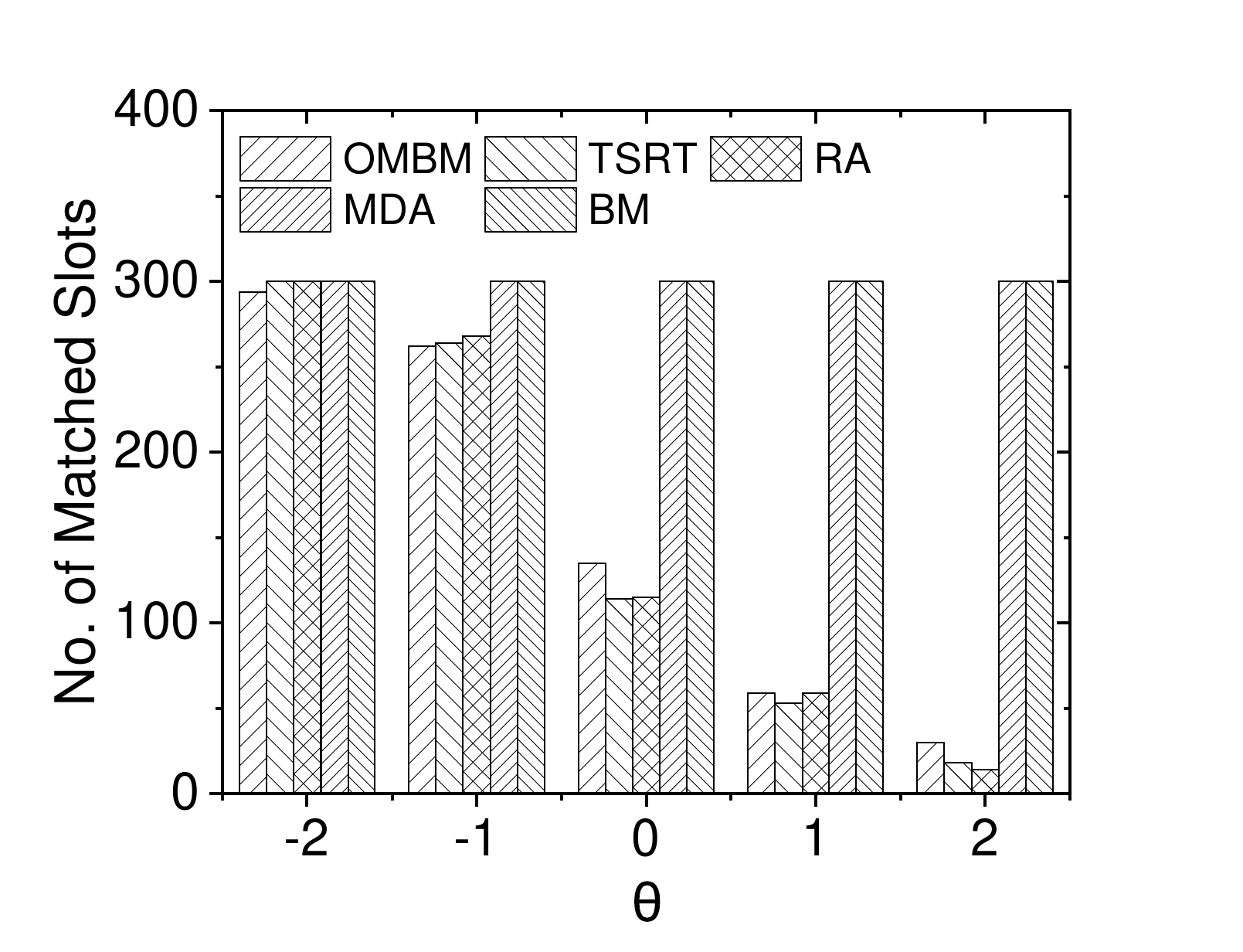} & \includegraphics[scale=0.17]{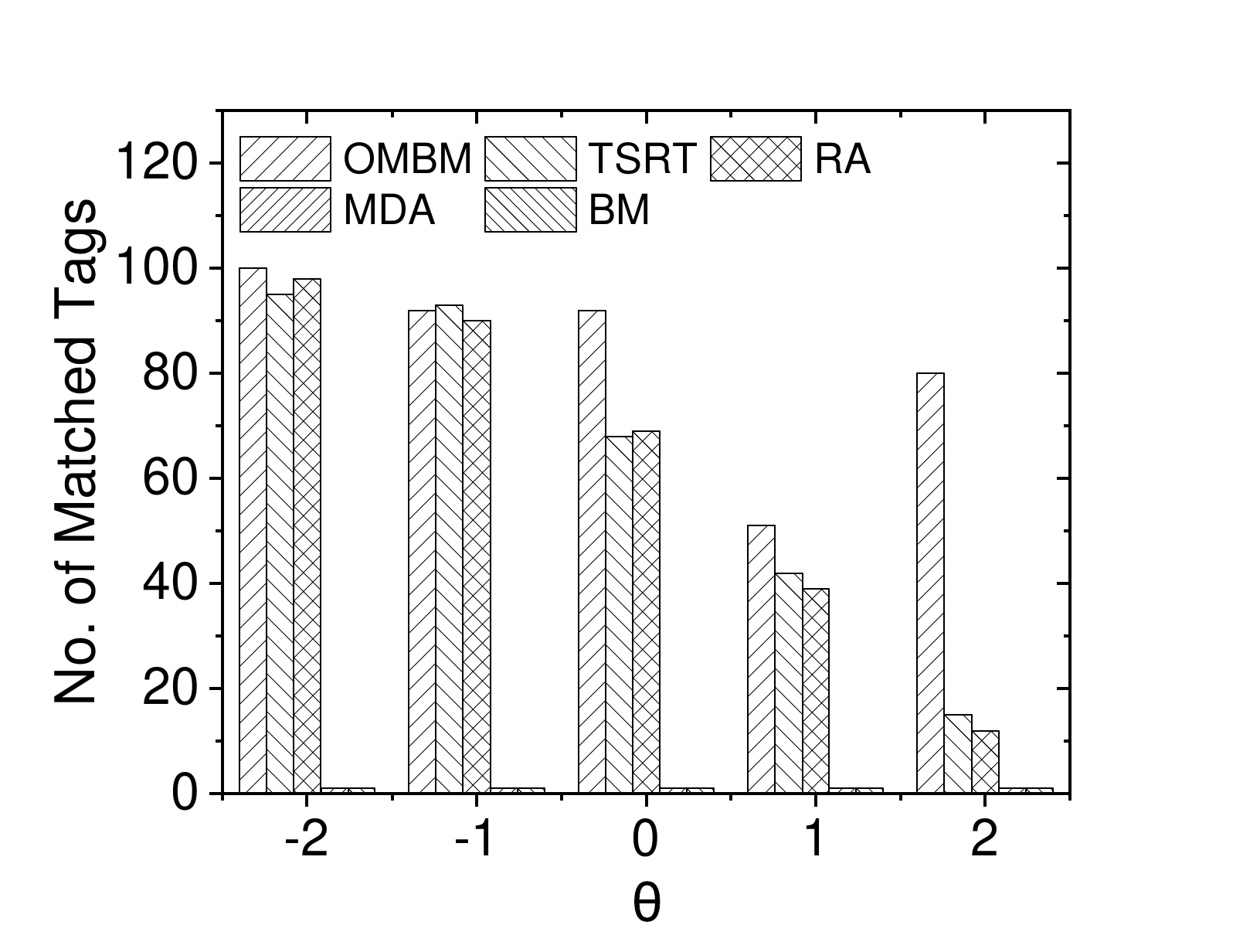} & \includegraphics[scale=0.17]{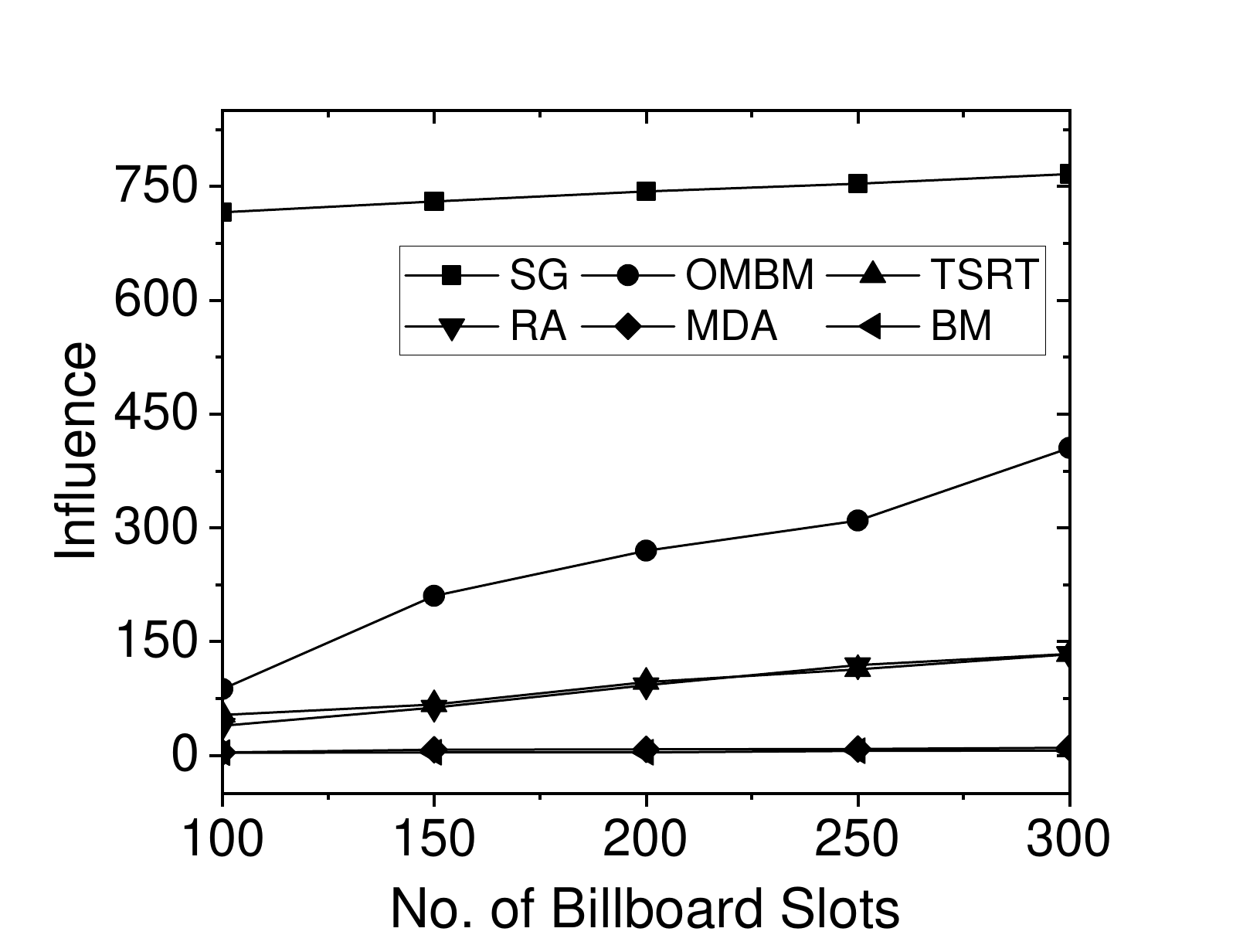} \\
(j)  Varying $\theta$ Vs. Slots (LA) & (k) Varying $\theta$ Vs. Tags (LA) & $(\ell)$ Slots Vs. Influence (LA) \\ 
\end{tabular}
\caption{ Varying billboard slots in NYC $(a,b,c,f)$, LA $(g,h,i, \ell)$, Varying $\theta$ in NYC $(d,e)$, LA $(j,k)$}
\label{Fig:Plot}
\end{figure*}

\paragraph{Tags, Slots Vs. Time.}
Figure \ref{Fig:Plot}(c) and \ref{Fig:Plot}(i) shows the efficiency of our proposed solution, and from this, we have three main observations. First, with the increase in slots, the run time for all the baseline and proposed approaches increases. The time requirement for the stochastic greedy approach is not presented in this paper; however, in NYC, it takes $152632$, $152725$, $153038$, $153046$, and $153125$ seconds for the number of slots $100$, $150$, $200$, $250$, and $300$, respectively. In LA run time is $10880$, $11039$, $11259$, $11283$, and $11289$ seconds for the same parameter setting. Second, the proposed `OMBM' takes more time than the baseline methods because it needs a comparison of both side vertices in the bipartite graph, i.e., the best match for each slot to tags and tags to slots, resulting in significant increases in run time. Third, the `RA' and `TSRT'  are the fastest methods because they need to compare each slot to tags only once.
\paragraph{Varying $\theta$ value.}  Figure \ref{Fig:Plot}(d,e) and Figure \ref{Fig:Plot}(j,k) show the impact of varying $\theta$ values on the NYC and LA datasets, respectively. We have three main observations. First, the parameter $\theta$ controls the pruning criteria in the bipartite graph in Algorithm \ref{Algo:TAP}, and accordingly, a dense or sparse bipartite graph is generated. Second,  the effectiveness and efficiency are very sensitive to varying $\theta$ values in the NYC and LA datasets for the `OMBM', `RA', and `TSRT' when we very billboard slots from $100$ to $300$. Third, the large $\theta$ value will increase the efficiency with worse effectiveness. This happens because when $\theta$ increases, the threshold for pruning increases, the graph becomes sparse, and the number of matched slots and tags becomes less.
\paragraph{Effectiveness Test.} Figure \ref{Fig:Plot}(f) and Figure \ref{Fig:Plot}($\ell$) shows the effectiveness of the proposed and baseline methods for  NYC and LA datasets, respectively. We have two main observations. First, stochastic greedy (SG) selects the required number of slots and tags, which returns maximum influence compared to the proposed and baseline methods. This occurs because the output of `SG' is the input of the proposed and baseline methods in the form of a bipartite graph, and after applying the $\theta$-score threshold, some slots and tags are pruned from that graph. Second, our proposed `OMBM' outperforms the baseline methods in allocating the most influential tags to appropriate slots, and among the baseline methods, `RA' and `TSRT' perform well compared to the other baselines.
\paragraph{Scalability Test.} 
To evaluate the scalability of our method, we vary the number of slots and tags from $100$ to $300$ and $25$ to $125$, respectively. As shown in Figure \ref{Fig:Plot}(c,i), the efficiency in `OMBM' is very sensitive compared to the baseline methods with a fixed number of tags and varying slot values. We observed that runtime also increases in both NYC and LA datasets with the increasing number of slots from $100$ to $300$ by $3\times$ to $5 \times$ more for the `OMBM' and baseline methods.
\paragraph{Additional Discussion.} 
Now, we discuss the effect of additional parameters in our experiments. First, we vary the number of slots $(k)$ between 100 and 300 and the number of tags $(\ell)$ between 25 and 125 to select the required slots and tags. In Figure \ref{Fig:Plot}, all results are presented for varying $k$ values while keeping the number of tags fixed. Second, in the stochastic greedy approach \cite{ali2024influential}, $\epsilon$ determines the size of the random sample subsets. This work uses $\epsilon = 0.01$ as the default setting. Also, we varied the $\epsilon$ value from 0.01 to 0.2 and observed that as $\epsilon$ increases, the run time of the stochastic greedy approach decreases; however, the quality of the solution degrades, i.e., the influence value decreases. Third, the influence of all the proposed and baseline methods increases when we vary the distance $(\lambda)$ from 25 meters to 150 meters. Both influence and run-time increase because one slot can influence more trajectories within its influence range. In our experiment, we set the value of $\lambda$ to 100 meters for both the NYC and LA datasets, as we observed that increasing $\lambda$ beyond 100 meters resulted in only marginal improvements. This paper does not report the effect of varying $\lambda$ and $\epsilon$ due to space limitations.
\section{Concluding Remarks}\label{Sec:Conclusion} 
In this paper, we have studied the Tag Assignment Problem in the context of Billboard Advertisement and proposed a one-to-many bipartite matching-based solution approach. The proposed methodology is illustrated with an example. An analysis of the proposed methodology has been done to understand its time and space requirements as well as its performance guarantee. Experiments with real-world datasets show that our approach outperforms baseline methods in terms of influence. This study can be extended by considering the following directions. Our study does not consider the `zonal influence constraint' which can be considered for future study. The methodology proposed in this paper works for a single advertiser. One important future direction will be to extend our study in multi-advertiser setting.


%
%
%
\bibliographystyle{splncs04}
\bibliography{Paper}

\begin{thebibliography}{10}
\providecommand{\url}[1]{\texttt{#1}}
\providecommand{\urlprefix}{URL }
\providecommand{\doi}[1]{https://doi.org/#1}

\bibitem{ali2022influential}
Ali, D., Banerjee, S., Prasad, Y.: Influential billboard slot selection using
  pruned submodularity graph. In: International Conference on Advanced Data
  Mining and Applications. pp. 216--230. Springer (2022)

\bibitem{ali2023influential}
Ali, D., Banerjee, S., Prasad, Y.: Influential billboard slot selection using
  spatial clustering and pruned submodularity graph. arXiv preprint
  arXiv:2305.08949  (2023)

\bibitem{ali2024influential}
Ali, D., Gupta, T., Banerjee, S., Prasad, Y.: Influential slot and tag
  selection in billboard advertisement. arXiv preprint arXiv:2401.10601  (2024)

\bibitem{banerjee2022budgeted}
Banerjee, S., Pal, B.: Budgeted influence and earned benefit maximization with
  tags in social networks. Social Network Analysis and Mining  \textbf{12}(1),
  ~21 (2022)

\bibitem{dutta2019one}
Dutta, A., Asaithambi, A.: One-to-many bipartite matching based coalition
  formation for multi-robot task allocation. In: 2019 International Conference
  on Robotics and Automation (ICRA). pp. 2181--2187. IEEE (2019)

\bibitem{https://doi.org/10.1112/jlms/s1-10.37.26}
Hall, P.: On representatives of subsets. Journal of the London Mathematical
  Society  \textbf{s1-10}(1),  26--30 (1935)

\bibitem{ke2018finding}
Ke, X., Khan, A., Cong, G.: Finding seeds and relevant tags jointly: For
  targeted influence maximization in social networks. In: Proceedings of the
  2018 international conference on management of data. pp. 1097--1111 (2018)

\bibitem{kuhn2005hungarian}
Kuhn, H.W.: The hungarian method for the assignment problem. Naval Research
  Logistics (NRL)  \textbf{52}(1),  7--21 (2005)

\bibitem{preis1999linear}
Preis, R.: Linear time 1/2-approximation algorithm for maximum weighted
  matching in general graphs. In: Annual Symposium on Theoretical Aspects of
  Computer Science. pp. 259--269. Springer (1999)

\bibitem{tekawade2023influence}
Tekawade, A., Banerjee, S.: Influence maximization with tag revisited:
  Exploiting the bi-submodularity of the tag-based influence function. In:
  International Conference on Advanced Data Mining and Applications. pp.
  772--786. Springer (2023)

\bibitem{yang2014modeling}
Yang, D., Zhang, D., Zheng, V.W., Yu, Z.: Modeling user activity preference by
  leveraging user spatial temporal characteristics in lbsns. IEEE Transactions
  on Systems, Man, and Cybernetics: Systems  \textbf{45}(1),  129--142 (2015)

\bibitem{zhang2020towards}
Zhang, P., Bao, Z., Li, Y., Li, G., Zhang, Y., Peng, Z.: Towards an optimal
  outdoor advertising placement: When a budget constraint meets moving
  trajectories. ACM Transactions on Knowledge Discovery from Data (TKDD)
  \textbf{14}(5),  1--32 (2020)

\bibitem{zhang2019optimizing}
ZHANG, Y., LI, Y., BAO, Z., MO, S., ZHANG, P.: Optimizing impression counts for
  outdoor advertising.(2019). In: Proceedings of the 25th ACM SIGKDD
  International Conference on Knowledge Discovery \& Data Mining, Anchorage,
  Alaska. pp.~4--8 (2019)

\bibitem{zhang2021minimizing}
Zhang, Y., Li, Y., Bao, Z., Zheng, B., Jagadish, H.: Minimizing the regret of
  an influence provider. In: Proceedings of the 2021 International Conference
  on Management of Data. pp. 2115--2127 (2021)

\end{thebibliography}

\end{document}